\renewcommand{\natural}{{\mathbb{N}}}
\newcommand{\real}{{\mathbb{R}}}
\newcommand{\until}[1]{\{1,\ldots,#1\}} 
\newcommand{\diag}{\operatorname{diag}}
\newcommand{\EE}{\mathcal{E}} 
\newcommand{\GG}{\mathcal{G}}
\newcommand{\bx}{\mathbf{x}}
\newcommand{\bz}{\mathbf{z}}
 \newcommand{\subj}{\text{subj. to}}
\newcommand{\prox}{\mathbf{prox}}
\newcommand{\argmin}{\mathop{\rm argmin}}
\newcommand{\argmax}{\mathop{\rm argmax}}
\def \DDPGfixed/{Distributed Dual Proximal Gradient}
\def \DDPGgossip/{Asynchronous Distributed Dual Proximal Gradient}
\newcommand{\nbrs}{\mathcal{N}}
\newcommand{\dom}{\mathop{\bf dom}} 
\newcommand{\StatexIndent}[1][3]{%
  \setlength\@tempdima{\algorithmicindent}%
  \Statex\hskip\dimexpr#1\@tempdima\relax}
\algnewcommand{\algorithmicgoto}{\textbf{go to }}%
\algnewcommand{\Goto}[1]{\algorithmicgoto Line~\ref{#1}}%
\algnewcommand{\Label}{\State\unskip}
\renewcommand{\algorithmicwhile}{\hskip\algorithmicindent \textbf{While:}}
\newtheorem{theorem}{Theorem}[section]
 \newtheorem{lemma}[theorem]{Lemma}
\newtheorem{remark}[theorem]{Remark}
\newtheorem{assumption}[theorem]{Assumption}
\newcommand\oprocendsymbol{\hbox{$\square$}}
\newcommand\oprocend{\relax\ifmmode\else\unskip\hfill\fi\oprocendsymbol}
\begin{document}

\title{Asynchronous Distributed Optimization\\ via Randomized Dual Proximal Gradient}

\author{Ivano Notarnicola and Giuseppe Notarstefano 
  \thanks{ Ivano Notarnicola and Giuseppe Notarstefano are with the Department of Engineering,
    Universit\`a del Salento, Via Monteroni, 73100
    Lecce, Italy, \texttt{name.lastname@unisalento.it}. This result is part of a
    project that has received funding from the European Research Council (ERC)
    under the European Union's Horizon 2020 research and innovation programme
    (grant agreement No 638992 - OPT4SMART). } }

\maketitle

\begin{abstract}
  In this paper we consider distributed optimization problems in which the cost
  function is separable, i.e., a sum of possibly non-smooth functions all
  sharing a common variable, and can be split into a strongly convex term and a
  convex one. The second term is typically used to encode constraints or to
  regularize the solution. We propose a class of distributed optimization
  algorithms based on proximal gradient methods applied to the dual problem. We
  show that, by choosing suitable primal variable copies, the dual problem is
  itself separable when written in terms of conjugate functions, and the dual
  variables can be stacked into non-overlapping blocks associated to the
  computing nodes. We first show that a weighted proximal gradient on the dual
  function leads to a synchronous distributed algorithm with local dual proximal
  gradient updates at each node. Then, as main paper contribution, we develop
  asynchronous versions of the algorithm in which the node updates are triggered
  by local timers without any global iteration counter. The algorithms are shown
  to be proper randomized block-coordinate proximal gradient updates on the dual
  function.
\end{abstract}

\section{Introduction}
\label{sec:intro}
%
%
Several estimation, learning, decision and control problems arising in
cyber-physical networks involve the distributed solution of a constrained
optimization problem. 
Typically, computing processors have only a partial knowledge of the problem
(e.g. a portion of the cost function or a subset of the constraints) and need to
cooperate in order to compute a global solution of the whole problem.
A key challenge 
when designing distributed optimization algorithms in peer-to-peer networks is
that the communication among the nodes is time-varying and possibly
asynchronous. 
%

Early references on distributed optimization algorithms involved primal and dual
subgradient methods and Alternating Direction Method of Multipliers (ADMM),
designed for synchronous communication protocols over fixed graphs. More
recently time-varying versions of these algorithmic ideas have been proposed to
cope with more realistic peer-to-peer network scenarios.
 A Newton-Raphson consensus strategy is proposed in
\cite{zanella2012asynchronous} to solve unconstrained, convex optimization
problems under asynchronous, symmetric gossip communications.
In~\cite{shi2015extra} a primal, synchronous algorithm, called EXTRA, is 
proposed to solve smooth, unconstrained optimization problems.
In \cite{jakovetic2014convergence} the authors propose accelerated distributed
gradient methods for unconstrained optimization problems over symmetric,
time-varying networks connected on average.
In order to deal with time-varying and directed graph topologies, in
\cite{nedic2015distributed} a push-sum algorithm for average consensus is
combined with a primal subgradient method in order to solve unconstrained convex
optimization problems.
Paper \cite{akbari2014distributed} extends this algorithm to online distributed
optimization over time-varying, directed networks.
In \cite{kia2015distributed} a novel class of continuous-time, gradient-based distributed
algorithms is proposed both for fixed and time-varying graphs and conditions for
exponential convergence are provided.
A distributed (primal) proximal-gradient method is proposed in
\cite{chen2012fast} to solve (over time-varying, balanced communication graphs)
optimization problems with a separable cost function including local
differentiable components and a common non-differentiable term.
In \cite{tsianos2012consensus} experiments of a dual averaging algorithm are
  run for separable problems with a common constraint on time-varying and
  directed networks.

For general constrained convex optimization problems, in
\cite{lee2013distributed} the authors propose a distributed random projection
algorithm, that can be used by multiple agents connected over a (balanced)
time-varying network.
%
In \cite{necoara2013random} the author proposes (primal) randomized
block-coordinate descent methods for minimizing multi-agent convex optimization
problems with linearly coupled constraints over networks.
In \cite{wei20131} an asynchronous ADMM-based distributed method is
proposed for a separable, constrained optimization problem. The algorithm is
shown to converge at the rate $O (1/t)$ (being $t$ the iteration counter). 
In \cite{iutzeler2013explicit} the ADMM approach is proposed for a more general
framework, thus yielding a continuum of algorithms ranging from a fully
centralized to a fully distributed.
In \cite{bianchi2014stochastic} a method, called ADMM+, is proposed to solve
separable, convex optimization problems with a cost function written as the sum
of a smooth and a non-smooth term.

Successive block-coordinate updates are proposed
  in~\cite{richtarik2015parallel,facchinei2015parallel} to solve separable
  optimization problems in a parallel big-data setting.
Another class of algorithms exploits the exchange of active constraints among
the network nodes to solve constrained optimization problems
\cite{notarstefano2007distributed}. This idea has been combined with dual
decomposition and cutting-plane methods to solve robust convex optimization
problems via polyhedral approximations~\cite{burger2014polyhedral}. These
algorithms work under asynchronous, direct and unreliable communication.

The contribution of the paper is twofold. First, for a fixed graph topology, we
develop a distributed optimization algorithm (based on a centralized dual
proximal gradient idea introduced in~\cite{beck2009fast}) to minimize a
separable strongly convex cost function. The proposed distributed algorithm is
based on a proper choice of primal constraints (suitably separating the
graph-induced and node-local constraints), that gives rise to a dual problem
with a separable structure when expressed in terms of local conjugate
functions. Thus, a proximal gradient applied to such a dual problem turns out to
be a distributed algorithm where each node updates: (i) its primal variable
through a local minimization and (ii) its dual variables through a suitable
local proximal gradient step. The algorithm inherits the convergence properties
of the centralized one and thus exhibits an $O(1/t)$ (being $t$ the iteration
counter) rate of convergence in objective value. We point out that the algorithm
can be accelerated through a Nesterov's scheme, \cite{nesterov2013gradient},
thus obtaining an $O(1/t^2)$ convergence rate.

Second, as main contribution, we propose an asynchronous algorithm for a
symmetric \emph{event-triggered} communication protocol. In this communication
set-up, a node is in idle mode until its local timer triggers.  When in idle, it
continuously collects messages from neighboring nodes that are awake and, if
needed, updates a primal variable.  When the local timer triggers, it updates
local primal and dual variables and broadcasts them to neighboring nodes. Under
mild assumptions on the local timers, the whole algorithm results into a uniform
random choice of one active node per iteration. Using this property and showing
that the dual variables can be stacked into separate blocks, we are able to
prove that the distributed algorithm corresponds to a block-coordinate proximal
gradient, as the one proposed in~\cite{richtarik2014iteration}, performed on the
dual problem.
Specifically, we are able to show that the dual variables handled by a single
node represent a single variable-block, and the local update at each triggered
node turns out to be a block-coordinate proximal gradient step (in which each
node has its own local step-size).
The result is that the algorithm inherits the convergence properties of the
block-coordinate proximal gradient in~\cite{richtarik2014iteration}.

An important property of the distributed algorithm is that it can solve fairly
general optimization problems including both composite cost functions and
  local constraints. A key distinctive feature of the algorithm analysis is the
combination of duality theory, coordinate-descent methods, and properties of the
proximal operator when applied to conjugate functions.


To summarize, our algorithms compare to the literature in the following way.
Works in \cite{zanella2012asynchronous,jakovetic2014convergence,
  nedic2015distributed,akbari2014distributed,kia2015distributed} do not handle constrained
optimization and use different methodologies.  
In \cite{chen2012fast,lee2013distributed,necoara2013random} primal approaches
are used. Also, local constraints and regularization terms cannot be handled
simultaneously. 
In \cite{chen2012fast} a proximal operator is used, but only to handle a common,
non-smooth cost function (known by all agents) directly on the primal problem.
The algorithm in \cite{necoara2013random} uses a coordinate-descent idea similar
to the one we use in this paper, but it works directly on the primal problem,
does not handle local constraints and does not make use of proximal operators.
In this paper we propose a flexible dual approach to take into account both
local constraints and regularization terms.
The problem set-up in \cite{wei20131,iutzeler2013explicit,bianchi2014stochastic}
is similar to the one considered in this paper. Differently from our approach,
which is a dual method, ADMM-based algorithms are proposed in those
references. This difference results in different algorithms as well as different
requirements on the cost functions. Moreover, compared to these algorithms we
are able to use constant, local step-sizes (which can be locally computed at the
beginning of the iterations) so that the algorithm can be run asynchronously and
without any coordination step. On this regard, we propose an algorithmic
formulation of the asynchronous protocol that explicitly relies on local timers
and does not need any global iteration counter in the update laws.

The paper is organized as follows. In Section~\ref{sec:set-up_network} we set-up
the optimization problem, while in Section~\ref{sec:dual_deriv} we derive an
equivalent dual problem amenable for distributed computation. The
distributed algorithms are introduced in Section~\ref{sec:distr_dual_prox} and
analyzed in Section~\ref{sec:analysis}. Finally, in
Section~\ref{sec:examples_sim} we highlight the flexibility of the proposed
algorithms by showing important optimization scenarios that can be addressed,
and corroborate the discussion with numerical computations.

\paragraph*{Notation}
Given a closed, nonempty convex set $X$, the indicator function of $X$ is
defined as $I_X(x) = 0$ if $x\in X$ and $I_X(x) = +\infty$ otherwise.
Let $\varphi \,:\, \real^d \to \real \cup \{+\infty \}$, its conjugate function 
$\varphi^*\,:\, \real^d \to \real$ is defined as $\varphi^*(y) := \sup_x \big\{ y^\top x - \varphi(x) \big\}$.
Let $\varphi \,:\, \real^d \to \real \cup \{+\infty \}$ be a closed, proper, convex
function and $\alpha$ a positive scalar, the proximal operator $\prox_{\alpha \varphi}
\,:\, \real^d \to \real^d$ is defined by $\prox_{\alpha \varphi} (v) := \argmin_x
\big\{ \varphi(x) + \frac{1}{2\alpha} \| x-v\|^2 \big\}$.
We also introduce a generalized version of the proximal operator. Given a positive
definite matrix $W \in \real^{d\times d}$, we define
 \begin{align*}
   \prox_{W, \varphi} (v) := \argmin_x \Big\{ \varphi(x) + \frac{1}{2} \big\| x - v \big\|^2_{W^{-1}} \Big\}.
 \end{align*}

\section{Problem set-up and network structure}
\label{sec:set-up_network}
We consider the following optimization problem
\begin{align}
  \label{eq:problemsetup}
  \min_x \sum_{i=1}^n \Big( f_i(x) + g_i(x) \Big),
\end{align}
where $f_i : \real^d \to \real \cup \{+\infty \}$ are proper, closed and
strongly convex extended real-valued functions with strong convexity parameter
$\sigma_i>0$ and $g_i : \real^d \to \real\cup \{+\infty \}$ are proper, closed
and convex extended real-valued functions.

Note that the split of $f_i$ and $g_i$ may be non-unique and depend on the
problem structure. Intuitively, on $f_i$ an easy minimization step can be
performed (e.g., by division free operations, \cite{odonoghue2013splitting}), 
while $g_i$ has an easy expression of its proximal operator.

\begin{remark}[Min-max problems]
  Our setup does not require $f_i$ to be differentiable, thus one can also consider
  each strongly convex function $f_i$ given by $f_i(x) := \max_{j\in\until{m_i}}
  f_{ij}(x)$, $m_i\in\natural$,
  where $\{ f_{ij}(x) \mid j \in \until{m_i} \}$ is a nonempty collection of strongly convex functions.%
  \oprocend
\end{remark}

Since we will work on the dual of problem~\eqref{eq:problemsetup}, we
  introduce the next standard assumption which guarantees that the dual is
  feasible and equivalent to the primal (strong duality).

\begin{assumption}[Constraint qualification]
The~intersection of the relative interior of $\dom \, \sum_{i=1}^n f_i$ and the relative interior
of $\dom \, \sum_{i=1}^n g_i$ is non-empty.
\oprocend
\label{ass:Slater}
\end{assumption}

We want this optimization problem to be solved in a distributed way by a network
of peer processors without a central coordinator. 
Each processor has a local memory, a local computation capability and can exchange
  information with neighboring nodes.
We assume that the communication can occur among nodes that are neighbors in a
given fixed, undirected and connected graph $\GG = (\until{n},\EE)$, where
$\EE\subseteq \until{n} \times \until{n}$ is the set of edges. That is, the edge
$(i,j)$ models the fact that node $i$ and $j$ can exchange information.  We
denote by $\nbrs_i$ the set of \emph{neighbors} of node $i$ in the fixed graph
$\GG$, i.e., $\nbrs_i := \left\{j \in \until{n} \mid (i,j) \in \EE \right\}$,
and by $|\nbrs_i|$ its cardinality.

In Section~\ref{sec:distr_dual_prox} we will propose distributed algorithms for
three communication protocols. Namely, we will consider a synchronous protocol
(in which neighboring nodes in the graph communicate according to a common
clock), and two asynchronous ones, respectively node-based and edge-based (in
which nodes become active according to local timers).

To exploit the sparsity of the graph we introduce copies of $x$ and their coherence (consensus) constraint, so that the
optimization problem \eqref{eq:problemsetup} can be equivalently rewritten as
\begin{align}
  \begin{split}
  \min_{x_1, \ldots, x_n} &\; \sum_{i=1}^n  \Big( f_i (x_i) + g_i (x_i) \Big)
  \\
  \subj & \; x_i = x_j \hspace{0.4cm} \forall\, (i,j) \in \EE
  \end{split}
\label{eq:primal_problem_x}
\end{align}
with $x_i\in \real^d$ for all $i\in\until{n}$. The connectedness of
$\mathcal{G}$ guarantees the equivalence.

Since we propose distributed dual algorithms, next we derive the dual problem
and characterize its properties.

\section{Dual problem derivation} %
\label{sec:dual_deriv}
We derive the dual version of the problem that will allow us to design our
distributed dual proximal gradient algorithms. To obtain the desired separable
structure of the dual problem, we set-up an equivalent formulation of
problem~\eqref{eq:primal_problem_x} by adding new variables $z_i$,
$i\in\until{n}$, i.e.,
\begin{align}
\label{eq:primal_problem}
  \begin{split}
  \min_{\stackrel{x_1, \ldots, x_n}{z_1, \ldots, z_n}} & \; \sum_{i=1}^n \Big( f_i (x_i) + g_i (z_i) \Big)
  \\
  \subj & \; x_i = x_j \hspace{0.4cm} \forall\, (i,j) \in \EE \\
  & \; x_i = z_i \hspace{0.5cm} \forall\, i \in \until{n}.
  \end{split}
\end{align}

Let ${\bf x} = [x_1^\top \,\ldots\, x_n^\top]^\top$ and ${\bf z}  = [z_1^\top \, \ldots\, z_n^\top]^\top$,
the Lagrangian of primal problem \eqref{eq:primal_problem} is given by
\begin{align*}
  	L( \bx,\bz,\Lambda,\mu) & = \sum_{i=1}^n \bigg ( f_i (x_i) + g_i (z_i)
  \\
  & \hspace{0.7cm} + \sum_{j\in \nbrs_i} \left (\lambda_i^j \right )^\top (x_i-x_j) + \mu_i^\top(x_i  - z_i) \bigg )
  \\
  & = \sum_{i=1}^n \! \bigg ( f_i (x_i) +\! \sum_{j\in \nbrs_i} \! \Big( \lambda_i^j \Big)^\top \! (x_i-x_j) + \mu_i^\top x_i
  \\
  & \hspace{1.0cm} + g_i (z_i) - \mu_i^\top z_i \bigg),
\end{align*}
where $\Lambda$ and $\mu$ are respectively the vectors of the Lagrange
multipliers $\lambda_i^j$, $(i,j)\in\EE$, and $\mu_i$, $i\in\until{n}$, and in
the last line we have separated the terms in $\bx$ and $\bz$.
Since $\GG$ is undirected, the Lagrangian can be rearranged as
\begin{align*}
  L(\bx,\bz,\Lambda,\mu) & 
    =\sum_{i=1}^n \bigg ( f_i (x_i) + x_i^\top  \Big( \sum_{j\in \nbrs_i} (\lambda_i^j -\lambda_j^i) + \mu_i \Big) 
  \\
  & \hspace{1.2cm} + g_i (z_i) - z_i^\top\mu_i \bigg).
\end{align*}

The dual function is
\begin{align*}
  q(\Lambda,\mu) & := \min_{\bx,\bz} L(\bx,\bz,\Lambda, \mu) 
  \\
  & = \min_{\bx} \sum_{i=1}^n \bigg ( f_i (x_i) + x_i^\top \bigg( \sum_{j\in \nbrs_i} (\lambda_i^j -\lambda_j^i) + \mu_i \bigg) \bigg )
  \\
  & \hspace{1cm} + \min_{\bz} \sum_{i=1}^n \Big(g_i (z_i) - z_i^\top\mu_i \Big) 
  \\
  & = \sum_{i=1}^n \min_{x_i} \bigg ( f_i (x_i) + x_i^\top  \Big( \sum_{j\in \nbrs_i} (\lambda_i^j -\lambda_j^i) + \mu_i \Big) \bigg)
  \\
  & \hspace{1cm} + \sum_{i=1}^n \min_{z_i} \Big(g_i (z_i) - z_i^\top\mu_i \Big),
\end{align*}
where we have used the separability of the Lagrangian with respect to each $x_i$
and each $z_i$. Then, by using the definition of conjugate function (given in
the Notation paragraph), the dual function can be expressed as
\begin{align*}
  q(\Lambda,\mu) & = \sum_{i=1}^n \Bigg ( \!\! -\! f_i^* \bigg( \!-\! \sum_{j\in \nbrs_i}
      (\lambda_i^j -\lambda_j^i) - \mu_i \bigg) \!- g_i^*(\mu_i) \Bigg).
\end{align*}

The dual problem of \eqref{eq:primal_problem} consists of maximizing the dual function 
with respect to dual variables $\Lambda$ and $\mu$, i.e.
\begin{align}
  \label{eq:dual_problem}
  \!\!\max_{\Lambda,\mu} &\; \sum_{i=1}^n \bigg ( \!\! -\! f_i^* \bigg( \!-\! \sum_{j\in \nbrs_i}
      (\lambda_i^j -\lambda_j^i) - \mu_i \bigg) \!- g_i^*(\mu_i) \bigg).
\end{align}

By Assumption~\ref{ass:Slater} the dual problem~\eqref{eq:dual_problem} is
feasible and strong duality holds, so that \eqref{eq:dual_problem} can be
equivalently solved to get a solution of \eqref{eq:primal_problem}.

In order to have a more compact notation for problem~\eqref{eq:dual_problem}, we
stack the dual variables as $y = [y_1^\top \;\ldots\; y_n^\top]^\top$, where
\begin{align}
  \label{eq:yi}
  y_i = 
  \begin{bmatrix}
    \Lambda_i \\ \mu_i
  \end{bmatrix}
  \in \real^{d |\nbrs_i|+d}
\end{align}
with $\Lambda_i\in \real^{d|\nbrs_i|}$ a vector whose block-component associated to
neighbor $j$ is $\lambda_i^j \in \real^d$.
Thus, changing sign to the cost function, dual problem~\eqref{eq:dual_problem} can be restated as
\begin{align}
  \label{eq:dual_min_problem}
  \min_{y} &\; \Gamma(y) = F^*(y) + G^*(y),
\end{align}
where 
\begin{align*}
  F^*\! (y) \! := \! \sum_{i=1}^n f_i^* \Big(\!\!-\!\!\sum_{j\in \nbrs_i} (\lambda_i^j -\lambda_j^i) - \mu_i \Big),
  G^*\! (y) \! := \! \sum_{i=1}^n g_i^* \big( \mu_i \big).
\end{align*}

\section{Distributed dual proximal algorithms}
\label{sec:distr_dual_prox}
In this section we derive the distributed optimization
algorithms based on dual proximal gradient methods.

\subsection{Synchronous \DDPGfixed/ }
\label{subsec:algorithm_fixed}
We begin by deriving a synchronous algorithm on a fixed graph. We assume that all the
nodes share a common clock. At each time instant $t\in\natural$, every node
communicates with its neighbors in the graph $\GG = (\until{n},\EE)$ (defined in
Section~\ref{sec:set-up_network}) and updates its local variables. 

First, we provide an informal description of the distributed optimization
algorithm.
Each node $i\in\until{n}$ stores a set of local dual variables $\lambda_i^j$,
$j \in \nbrs_i$, and $\mu_i$, updated through a local proximal gradient step,
and a primal variable $x_i^\star$, updated through a local minimization.
Each node uses a properly chosen, \emph{local} step-size $\alpha_i$ for the
proximal gradient step.
Then, the updated primal and dual
values are exchanged with the neighboring nodes. %
The local dual variables at node $i$ are initialized as $\lambda_{i0}^j$,
$j\in\nbrs_i$, and $\mu_{i0}$.
A pseudo-code of the local update at each node of the distributed algorithm is
given in Algorithm~\ref{alg:fixed}.

\begin{algorithm}
  \begin{algorithmic}[0]
    \State \textbf{Processor states}: $x_i^\star$, $\lambda_i^j$ for all $j\in \nbrs_i$
    and $\mu_i$ \medskip
    \State \textbf{Initialization}: $\lambda_i^j(0)=\lambda_{i0}^j$ for all
    $j\in \nbrs_i$, $\mu_i(0)=\mu_{i0}$\medskip
    \StatexIndent[0.5] 
   \begin{small}
     $\!\!x_i^\star(0) =\argmin_{x_i} \left \{ \!x_i^\top\! \left (\! \sum_{j\in \nbrs_i} \! \left (\!
            \lambda_{i0}^j - \lambda_{j0}^i \!\right ) \!+\! \mu_{i0} \!\right )
        \! + \! f_i(x_i)  \!\right \}\medskip$
   \end{small}
    \State \textbf{Evolution}:
    \StatexIndent[0.25] \textsc{for}: $t=1,2,\ldots$ \textsc{do}
    \StatexIndent[0.5] receive $x_j^\star(t-1)$ for each $j\in \nbrs_i$, update
      \begin{align}
        \notag \!\!\!\lambda_i^j(t) & = \lambda_i^j(t-1) + \alpha_i \left [ x_i^\star(t-1) -
          x_j^\star(t-1) \right ]
      \end{align}
    \StatexIndent[0.5] and update
      \begin{align*}
        & \tilde{\mu}_i = \mu_i(t-1) + \alpha_i \; x_i^\star(t-1)
        \\
        & \mu_i(t)  
        =
        \tilde{\mu}_i - \alpha_i\;\prox_{\frac{1}{\alpha_i}g_i} 
        \left( \frac{\tilde{\mu}_i}{\alpha_i} \right)
      \end{align*}
    \StatexIndent[0.5] receive $\lambda_j^i(t)$ for each $j\in \nbrs_i$ and compute
      \begin{small}
      \begin{align}
        \notag
        \!\!\! x_i^\star(t)
         = \argmin_{x_i} \bigg\{ \!x_i^\top\! \bigg( \sum_{j\in \nbrs_i} \! 
         \Big (\! \lambda_i^j(t) - \lambda_j^i(t) \! \Big ) \!+\! \mu_i(t) \! \bigg)
         \! + \! f_i(x_i)  \! \bigg\}
      \end{align}
      \end{small}

  \end{algorithmic}
  \caption{\small \DDPGfixed/}
  \label{alg:fixed}
\end{algorithm}
\FloatBarrier
\begin{remark}\label{rem:initialization}
  In order to start the algorithm, a preliminary communication step is needed in
  which each node $i$ receives from each neighbor $j$ its $\lambda_{j0}^i$ (to
  compute $x_i^\star(0)$) and the convexity parameter $\sigma_j$ of $f_j$ 
  (to set $\alpha_i$, as it will be clear from the analysis in
  Section~\ref{subsec:synch_analysis}).  This step can be avoided if the nodes
  agree to set $\lambda_{i0}^j =0$ and know a bound for $\alpha_i$. Also, it is
  worth noting that, differently from other algorithms, in general
  $\lambda_i^j(t) \neq -\lambda_j^i(t)$. ~\oprocend
\end{remark}

We point out once more that to run the \DDPGfixed/ algorithm the nodes need
  to have a common clock. Also, it is worth noting that, as it will be clear
  from the analysis in Section~\ref{subsec:synch_analysis}, to set the local
  step-size each node needs to know the number of nodes, $n$, in the network.
  In the next sections we present two asynchronous distributed algorithms which
  overcome these limitations.

\subsection{Asynchronous Distributed Dual Proximal Gradient}
\label{subsec:algorithm_gossip}
Next, we propose a node-based asynchronous algorithm. 
%
%
We consider an asynchronous protocol where each node has its own
concept of time defined by a local timer, which randomly and independently of
the other nodes triggers when to awake itself.
Between two triggering events the node is in an
\emph{idle} mode, i.e., it continuously receives messages from neighboring nodes
and, if needed, runs some auxiliary computation.  When a trigger occurs, it
switches into an \emph{awake} mode in which it updates its local (primal and
dual) variables and transmits the updated information to its neighbors.

Formally, the triggering process is modeled by means of a local clock
$\tau_i\in\real_{\ge 0}$ and a randomly generated waiting time $T_i$. As long as
$\tau_i < T_i$ the node is in idle mode. When $\tau_i = T_i$ the node switches
to the awake mode and, after running the local computation, resets $\tau_i = 0$
and draws a new realization of the random variable $T_i$.
We make the following assumption on the local waiting times $T_i$. 
\begin{assumption}[Exponential i.i.d. local timers] 
	%
The~waiting times between consecutive triggering events are 
  i.i.d.\ random variables with same exponential distribution.~\oprocend
\label{ass:timers}
\end{assumption}

When a node $i$ wakes up, it updates its local dual variables $\lambda_i^j$, $j\in\nbrs_i$ and
$\mu_i$ by a local proximal gradient step, and its primal variable
$x_i^\star$ through a local minimization. 
The \emph{local} step-size of the proximal gradient step for node $i$ is denoted by
$\alpha_i$.
In order to highlight the difference between updated and old variables at node
$i$ during the ``awake'' phase, we denote the updated ones as ${\lambda_i^j}^+$
and ${\mu_i}^+$ respectively. 
When a node $i$ is in idle it receives messages from awake neighbors. If a dual
variable $\lambda_j^i$ is received it computes an updated value of ${x_i^\star}$
and broadcasts it to its neighbors.
It is worth noting that, being the algorithm asynchronous, there is no common clock as in
the synchronous version.

\begin{algorithm}
  \begin{algorithmic}[0] 
    \Statex \textbf{Processor states}:
    $x_i^\star$, $\lambda_i^j$ for all $j\in \nbrs_i$ and $\mu_i$\medskip

    \State \textbf{Initialization}: $\lambda_i^j=\lambda_{i0}^j$ for all $j\in
    \nbrs_i$, $\mu_i=\mu_{i0}$ and %
      \StatexIndent[1] $x_i^\star ={\displaystyle\argmin_{x_i}} \Big
        \{ \!x_i^\top\! \Big(\! \sum_{j\in \nbrs_i} \! \Big(\!  \lambda_{i0}^j -
          \lambda_{j0}^i \!\Big) \!+\! \mu_{i0} \! \Big) \! + \! f_i(x_i)
        \! \Big\}$
      \StatexIndent[1] set $\tau_i = 0$ and get a realization $T_i$\medskip
    
    \Statex \textbf{Evolution}:
    \Label \texttt{\textbf{\textit{IDLE:}}}

      \StatexIndent[0.5] \textsc{while:} $\tau_i < T_i$ \textsc{do}:
      \StatexIndent[1] receive $x_j^\star$ and/or $\lambda_j^{i}$ from each $j\in \nbrs_i$.
      \StatexIndent[1] \textsc{if:} $\lambda_j^{i}$ is received \textsc{then:} compute and broadcast
      \vspace{-0.2cm}
      \begin{small}
      \begin{align*}
        x_i^\star
        = \argmin_{x_i} \bigg\{
        x_i^\top \bigg( \sum_{k \in \nbrs_{i}} \Big(
        \lambda_i^k  - \lambda_k^i \Big)+ \mu_{i} \bigg)
        + f_{i}(x_i)
        \bigg\}
      \end{align*}
      \end{small}
      \vspace{-0.5cm}
     \StatexIndent[0.5] go to \texttt{\textbf{\textit{AWAKE}}}.
      \vspace{0.2cm}
    \Label \texttt{\textbf{\textit{AWAKE:}}}
  
      \StatexIndent[0.5] update and broadcast ${\lambda_i^j}^+ \!\!=\! \lambda_{i}^j + \alpha_i \big ( x_i^\star - x_j^\star \big ), \forall \, j \in \nbrs_{i}$
      \StatexIndent[0.5] update 
     \begin{align*}
       & \tilde{\mu}_i = \mu_i + \alpha_i \; x_i^\star 
       \\
       & \mu_i^+ 
       = \tilde{\mu}_i - \alpha_i\;\prox_{\frac{1}{\alpha_i}g_i} \left( \frac{\tilde{\mu}_i}{\alpha_i} \right)
     \end{align*}

      \StatexIndent[0.5] compute and broadcast
        \begin{small}
        \begin{align}
          \notag
          x_i^\star
          = \argmin_{x_i} \bigg\{
            x_i^\top \bigg( \sum_{j\in \nbrs_i} \Big(
            {\lambda_i^j}^+ - \lambda_j^i \Big)+ \mu^+_i \bigg)
            + f_i(x_i) 
          \bigg\}
        \end{align}
        \end{small}

      \StatexIndent[0.5] set $\tau_i=0$, get a new realization $T_i$ and go to \texttt{\textbf{\textit{IDLE}}}.


  \end{algorithmic}
  \caption{\small \DDPGgossip/} 
  \label{alg:gossip}
\end{algorithm}

\subsection{Edge-based asynchronous algorithm}
In this section we present a variant of the \DDPGgossip/ in which an 
edge becomes active uniformly at random, rather than a node. In other words, we
assume that timers are associated to edges, rather than to nodes, that is a
waiting time $T_{ij}$ is extracted for edge $(i,j)$.

The processor states and their initialization stay the same except for the
timers. Notice that here we are assuming that nodes $i$ and $j$ have a common
waiting time $T_{ij}$ and, consistently, a local timer $\tau_{ij}$. Each
waiting time $T_{ij}$ satisfies Assumption~\ref{ass:timers}.

In Algorithm~\ref{alg:edge_based} we report (only) the evolution for this
modified scenario. 
In this edge-based algorithm the dual variable $\mu_i$ (associated to the
  constraint $x_i = z_i$) cannot be updated every time an edge $(i,j)$,
  $j\in\nbrs_i$, becomes active (otherwise it would be updated more often than
  the variables $\lambda_{i}^{j}$, $j\in\nbrs_i$). Thus, we identify one special
  neighbor $j_{\mu_i}\in\nbrs_i$ and update $\mu_i$ only when the edge
  $(i,j_{\mu_i})$ is active.

\begin{algorithm}
  \begin{algorithmic}[0] 
    \Label \texttt{\textbf{\textit{IDLE:}}} \textsc{while} $\tau_{ij}<
    T_{ij}$ \textsc{do}: nothing
       \StatexIndent[0.5] go to \texttt{\textbf{\textit{AWAKE}}}.

    \Label \texttt{\textbf{\textit{AWAKE:}}}  
  
      \StatexIndent[0.5] send $x_i^\star$  to $j$ and receive $x_j^\star$  from $j$
      \StatexIndent[0.5] update and send to $j$, ${\lambda_i^j}^+  = \lambda_i^j + \alpha_i \big ( x_i^\star - x_j^\star \big)$ 
      
      \StatexIndent[0.5] \textsc{if:} $j=j_{\mu_i}$ \textsc{then:} update $\mu_i^+ = \prox_{\alpha_i g_i^*} \big( \mu_i + \alpha_i \; x_i^\star  \big)$\vspace{0.1cm}
      \StatexIndent[0.5] compute
        \begin{small}
        \begin{align}
          \notag
          x_i^\star
          = \argmin_{x_i} \bigg\{
            x_i^\top \bigg( \sum_{j\in \nbrs_i} \Big(
            {\lambda_i^j}^+ - \lambda_j^i \Big)+ \mu^+_i \bigg)
            + f_i(x_i) 
          \bigg\}
        \end{align}
        \end{small}
      \StatexIndent[0.5] set $\tau_{ij}=0$, deal on a new realization $T_{ij}$ and go to \texttt{\textbf{\textit{IDLE}}}.

  \end{algorithmic}
  \caption{\small Edge-Based Formulation of Algorithm~\ref{alg:gossip} (evolution) } 
  \label{alg:edge_based}
\end{algorithm}

\section{Analysis of the distributed algorithms}
\label{sec:analysis}
To start the analysis we first introduce an extended version of (centralized)
proximal gradient methods.

\subsection{Weighted Proximal Gradient Methods}
\label{app:weig_prox}
Consider the following optimization problem
\begin{align}
  \min_{y\in \real^N} \:\: \Gamma(y) := \Phi(y) + \Psi(y),
  \label{eq:appendix_prob}
\end{align}
where $\Phi : \real^N \to \real$ and $\Psi : \real^N \to \real \cup
\{+\infty\}$ are convex functions.

We decompose the decision variable as $y = [y_1^\top \; \ldots \; y_n^\top]^\top$ and,
consistently, we decompose the space $\real^N$ into $n$ subspaces as follows.
Let $U \in \real^{N\times N}$ be a column permutation of the $N \times N$
identity matrix and, further, let $U = [U_1 \; U_2\;\ldots\;U_n]$ be a
decomposition of $U$ into $n$ submatrices, with $U_i\in \real^{N\times N_i}$ and
$\sum_i N_i = N$.  Thus, any vector $y\in \real^N$ can be uniquely written as
$y = \sum_i U_iy_i$ and, viceversa, $y_i = U_i^\top y$.

We let problem~\eqref{eq:appendix_prob} satisfy the following assumptions.

\begin{assumption}[\textbf{Block Lipschitz continuity of $\nabla \Phi$}]%
  The gradient of $\Phi$ is block coordinate-wise
  Lipschitz continuous with positive constants $L_1,\ldots,L_n$. That is, for
  all $y \in \real^N$ and $s_i\in \real^{N_i}$ it holds
  \begin{align*}
    \|\nabla_i \Phi(y + U_i s_i) - \nabla_i \Phi(y) \| \leq L_i \|s_i\|,
  \end{align*}
  where $\nabla_i \Phi(y)$ is the $i$-th block component of $\nabla \Phi(y)$.~\oprocend

\label{ass:smoothness}
\end{assumption}
\begin{assumption} [\textbf{Separability of $\Psi$}]
  The function $\Psi$ is block-separable, 
  i.e., it can be decomposed as $\Psi(y) = \sum_{i=1}^n\psi_i (y_i)$,
  with each $\psi_i : \real^{N_i} \to \real \!\cup \! \{+\infty\}$ a
  proper, closed convex extended real-valued function.~\oprocend
\label{ass:separability}
\end{assumption}

\begin{assumption}[\textbf{Feasibility}]
  The set of minimizers of problem~\eqref{eq:appendix_prob} is
  non-empty.~\oprocend
\label{ass:feasibility}
\end{assumption}

\medskip

\subsubsection{Deterministic descent}
We first show how the standard proximal gradient algorithm can be generalized by
using a weighted proximal operator.

Following the same line of proof as in \cite{beck2009fast}, we can prove that a generalized proximal gradient
iteration, given by
\begin{align}
  \label{eq:weighted_PG}
  & y(t+1) 
  = \prox_{W, \Psi} \Big( y(t) - W \nabla \Phi \big( y(t) \big) \Big)
  \\ \notag
  & \hspace{0.3cm} = \argmin_y \bigg\{ \! \Psi(y) \!+\! \frac{1}{2} \Big\| y \!-\! \Big( y(t)  
        -  W\nabla \Phi\big( y(t) \big) \Big) \Big\|_{W^{-1}}^2 \!\bigg\},
\end{align}
converges in objective value to the optimal solution 
of~\eqref{eq:appendix_prob} with rate $O(1/t)$.

In order to extend the proof of \cite[Theorem~3.1]{beck2009fast} we need to use
a generalized version of \cite[Lemma~2.1]{beck2009fast}.  To this end we can use
a result by Nesterov, given in \cite{nesterov2012efficiency}, which is recalled
in the following lemma for completeness.

\begin{lemma}[Generalized Descent Lemma]
  Let Assumption \ref{ass:smoothness} hold, then for all $s \in \real^N$
  \begin{align*}
    \Phi (y + s )  \le  \Phi (y) + s^\top\nabla \Phi(y) + \frac{1}{2} \big\| \, s \, \big\|_{W^{-1}}^2,
  \end{align*}  
  where $W := \diag(w_1, \ldots, w_n)$ satisfies $w_i \le \frac{1}{n L_i}$ for all
  $i\in\until{n}$.\oprocend
\label{lem:gen_descent}
\end{lemma}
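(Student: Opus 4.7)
The plan is to combine two classical ingredients: convexity of $\Phi$ (via Jensen's inequality applied to a specific convex combination), and the ordinary one-block descent inequality that follows from Assumption~\ref{ass:smoothness}. The cute observation is that the full increment $s = \sum_{i=1}^n U_i s_i$ can be rewritten as a convex combination of $n$ single-block moves by scaling each block by $n$:
\begin{align*}
y + s \;=\; \frac{1}{n} \sum_{i=1}^{n} \bigl( y + n\, U_i s_i \bigr).
\end{align*}
This bypasses the main technical obstacle, namely that the block Lipschitz assumption in Assumption~\ref{ass:smoothness} only controls $\nabla_i \Phi$ under perturbations of the $i$-th block alone, and gives no direct estimate on how the gradient changes when several blocks are perturbed simultaneously.

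First, I would establish the single-block descent inequality as a preliminary step. For fixed $i$ and $s_i \in \real^{N_i}$, consider the scalar function $\theta \mapsto \Phi(y + \theta U_i s_i)$. Its derivative is $s_i^\top \nabla_i \Phi(y + \theta U_i s_i)$, and Assumption~\ref{ass:smoothness} together with Cauchy--Schwarz shows that this derivative is Lipschitz in $\theta$ with constant $L_i\|s_i\|^2$. The classical one-dimensional descent lemma then yields, for any scalar $\eta$,
\begin{align*}
\Phi(y + \eta U_i s_i) \;\le\; \Phi(y) + \eta\, s_i^\top \nabla_i \Phi(y) + \frac{L_i}{2}\,\eta^2 \|s_i\|^2.
\end{align*}
Specializing to $\eta = n$ gives $\Phi(y + n U_i s_i) \le \Phi(y) + n\, s_i^\top \nabla_i \Phi(y) + \frac{n^2 L_i}{2}\|s_i\|^2$.

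Next, I would apply Jensen's inequality to the convex combination above, using convexity of $\Phi$:
\begin{align*}
\Phi(y+s) \;\le\; \frac{1}{n}\sum_{i=1}^{n} \Phi(y + n U_i s_i) \;\le\; \Phi(y) + \sum_{i=1}^n s_i^\top \nabla_i \Phi(y) + \sum_{i=1}^n \frac{n L_i}{2}\|s_i\|^2,
\end{align*}
where I have divided the previous single-block bound by $n$ and summed over $i$. Since $\sum_i s_i^\top \nabla_i \Phi(y) = s^\top \nabla \Phi(y)$, only the quadratic term remains to be handled.

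Finally, I would invoke the hypothesis $w_i \le 1/(nL_i)$, which is equivalent to $n L_i \le 1/w_i$. Hence
\begin{align*}
\sum_{i=1}^n \frac{n L_i}{2}\|s_i\|^2 \;\le\; \sum_{i=1}^n \frac{1}{2 w_i}\|s_i\|^2 \;=\; \frac{1}{2}\, s^\top W^{-1} s \;=\; \frac{1}{2}\|s\|_{W^{-1}}^2,
\end{align*}
using that $W = \diag(w_1,\ldots,w_n)$ in the block decomposition. Substituting into the Jensen bound yields the claimed inequality. The only nontrivial step is the convex combination trick at the outset; once that is in place, everything else is a routine combination of the one-block descent lemma and the definition of the weighted norm.
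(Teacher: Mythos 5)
Your proof is correct. Note, however, that the paper does not actually prove this lemma: it is stated as a recalled result and attributed to Nesterov \cite{nesterov2012efficiency}, so there is no in-paper argument to compare against. Your argument is essentially the standard one underlying that citation: the identity $y+s=\frac{1}{n}\sum_{i=1}^n(y+nU_is_i)$, Jensen's inequality, and the single-block descent inequality obtained from the block-coordinate-wise Lipschitz property; the factor $n$ in the condition $w_i\le\frac{1}{nL_i}$ is exactly the signature of this convex-combination device, so your constants match the statement. One point worth flagging explicitly: your use of Jensen requires convexity of $\Phi$, which is not listed among the hypotheses of the lemma (only Assumption~\ref{ass:smoothness} is invoked) but is a standing assumption on problem~\eqref{eq:appendix_prob} in that section, so the argument is legitimate in context; a purely block-wise Lipschitz hypothesis without convexity would not support the convex-combination step. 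Your preliminary single-block bound is also correctly derived, since Assumption~\ref{ass:smoothness} holds at every base point and hence gives Lipschitz continuity of $\theta\mapsto s_i^\top\nabla_i\Phi(y+\theta U_is_i)$ with constant $L_i\|s_i\|^2$.
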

Tighter conditions than the one given above can be found in \cite{richtarik2015parallel} 
and in \cite{necoara2016parallel}.

\begin{theorem}
  Let Assumption~\ref{ass:smoothness} and \ref{ass:feasibility} hold and let
  $\{y(t)\}$ be the sequence generated by iteration \eqref{eq:weighted_PG}
  applied to problem \eqref{eq:appendix_prob}. Then for any $t\ge 1$
  \begin{align*}
    \Gamma(y(t)) - \Gamma(y^\star) \leq \frac{ \big\| y_0 - y^\star \big\|^2_{W^{-1}} }{ 2t },
  \end{align*}
  where $W := \diag(w_1, \ldots, w_n)$ with $w_i \le \frac{1}{n L_i}$, $y_0$ is the
  initial condition and $y^\star$ is any minimizer of problem~\eqref{eq:appendix_prob}.
\label{thm:weighted_PG}
\end{theorem}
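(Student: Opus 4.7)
The plan is to adapt the proof of \cite[Theorem~3.1]{beck2009fast} by systematically replacing the Euclidean geometry and constant stepsize with the weighted geometry induced by $W$. Writing $p(y) := \prox_{W,\Psi}\big(y - W\nabla\Phi(y)\big)$ so that $y(t+1) = p(y(t))$, the backbone of the argument is the following \emph{fundamental descent inequality}:
\begin{align*}
\Gamma(p(y)) \le \Gamma(u) + \tfrac{1}{2}\|u-y\|^2_{W^{-1}} - \tfrac{1}{2}\|u-p(y)\|^2_{W^{-1}}, \quad \forall u \in \real^N.
\end{align*}
Once this is in place, the $O(1/t)$ rate follows by the standard two-instantiation trick: setting $u = y^\star$ telescopes the partial sums to $\sum_{k=0}^{t-1}[\Gamma(y(k+1)) - \Gamma(y^\star)] \le \tfrac{1}{2}\|y_0 - y^\star\|^2_{W^{-1}}$, while setting $u = y(k)$ yields $\Gamma(y(k+1)) \le \Gamma(y(k))$, so the sequence $\{\Gamma(y(k))\}$ is non-increasing and therefore $t\big[\Gamma(y(t)) - \Gamma(y^\star)\big]$ is bounded by the telescoped sum above, giving the stated bound.

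To establish the descent inequality I would proceed in three steps. First, the Generalized Descent Lemma (Lemma~\ref{lem:gen_descent}) with $s = p(y) - y$ gives $\Phi(p(y)) \le \Phi(y) + (p(y)-y)^\top\nabla\Phi(y) + \tfrac{1}{2}\|p(y)-y\|^2_{W^{-1}}$; combining with the convexity bound $\Phi(y) \le \Phi(u) + (y-u)^\top\nabla\Phi(y)$ yields
\begin{align*}
\Phi(p(y)) \le \Phi(u) + (p(y)-u)^\top\nabla\Phi(y) + \tfrac{1}{2}\|p(y)-y\|^2_{W^{-1}}.
\end{align*}
Second, the first-order optimality condition of the strongly convex minimization defining $p(y)$ produces a subgradient $\xi \in \partial\Psi(p(y))$ satisfying $\xi = W^{-1}(y-p(y)) - \nabla\Phi(y)$, which via convexity of $\Psi$ gives
\begin{align*}
\Psi(p(y)) \le \Psi(u) + (p(y)-u)^\top W^{-1}(y-p(y)) - (p(y)-u)^\top\nabla\Phi(y).
\end{align*}
Adding the two bounds the $\nabla\Phi$ terms cancel, and the remaining quadratic expression collapses via the weighted-norm identity
\begin{align*}
2(p-u)^\top W^{-1}(y-p) + \|p-y\|^2_{W^{-1}} = \|u-y\|^2_{W^{-1}} - \|u-p\|^2_{W^{-1}},
\end{align*}
which is a direct expansion using symmetry and positive definiteness of $W^{-1}$.

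The main delicate point is not the algebra but the careful formulation of the optimality condition for $p(y)$ in the weighted metric: the quadratic penalty $\tfrac{1}{2}\|\cdot\|^2_{W^{-1}}$ contributes $W^{-1}$-scaled terms to the subdifferential identity for $\xi$, and getting this scaling right is what makes the $\nabla\Phi$ terms cancel exactly and the remainder collapse into the clean three-point form. The stepsize bound $w_i \le 1/(nL_i)$ enters only through Lemma~\ref{lem:gen_descent} at the very first step. Once those pieces are correctly weighted, the rest of the proof parallels \cite{beck2009fast} essentially verbatim, with $L \cdot I$ replaced by $W^{-1}$ throughout.
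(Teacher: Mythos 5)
Your proposal is correct and follows exactly the route the paper intends: the paper's proof simply defers to \cite[Theorem~3.1]{beck2009fast} with Lemma~\ref{lem:gen_descent} substituted for the Euclidean descent lemma, and your weighted three-point inequality, the $W^{-1}$-scaled prox optimality condition, and the two-instantiation telescoping argument are precisely the details that substitution entails.
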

\begin{proof}
  The theorem is proven by following the same arguments as in
  \cite[Theorem~3.1]{beck2009fast}, but using Lemma~\ref{lem:gen_descent} in
  place of \cite[Lemma~2.1]{beck2009fast}.
\end{proof}

\medskip

\subsubsection{Randomized block coordinate descent}
Next, we present a randomized version of the weighted proximal gradient,
proposed in~\cite[Algorithm~$2$]{richtarik2014iteration} as Uniform Coordinate
Descent for Composite functions (UCDC) algorithm. 
\begin{algorithm}[h!]
  \begin{algorithmic}[0]
  
  \State Initialization: $y(0) = y_0$
  
  \StatexIndent[0] \textsc{for}: $t=0,1,2,\ldots$ \textsc{do}

    \StatexIndent[0.25] choose $i_t\in \{1,\ldots,n\}$ with probability $\frac{1}{n}$

    \StatexIndent[0.25] compute
    \begin{subequations}
    \label{eq:UCDC_Ti}
    \begin{align}
      T^{(i_t)} \big( y(t) \big) = \argmin_{s \in \real^{N_{i_t}}} \Big\{ V_{i_t}( y(t), s ) \Big\}
    \end{align}    
    \StatexIndent[0.5] where
    \begin{align}
      V_{i_t} (y, s ) \! := \! \nabla_{i_t} \Phi (y)^\top s \!+\!\frac{L_{i_t}}{2} \| s \|^2 \!+\! \psi_{i_t}( y_{i_t} \!+\! s )
    \end{align}
    \end{subequations}
    
    \StatexIndent[0.25] update 
    \begin{align}
      \label{eq:UCDC_update}
      y(t+1) = y(t) + U_{i_t} T^{(i_t)} \big( y(t) \big).
    \end{align}    
    
  \end{algorithmic}

  \caption{UCDC} \label{alg:ucdc}
\end{algorithm}

The convergence result for UCDC is given in \cite[Theorem~$5$]{richtarik2014iteration},
here reported for completeness.

\begin{theorem}[{\cite[Theorem~$5$]{richtarik2014iteration}}]
  Let Assumptions~\ref{ass:smoothness}, \ref{ass:separability} and
  \ref{ass:feasibility} hold. Let $\Gamma^\star$ denote the optimal cost of
  problem~\eqref{eq:appendix_prob}.
  Then, for any $\varepsilon \in \Big( 0, \Gamma(y_0) - \Gamma^\star \Big)$, there exists
  $\bar{t}(\varepsilon,\rho)>0$ such that if $y(t)$ is the random sequence
  generated by UCDC (Algorithm~\ref{alg:ucdc}) applied to problem~\eqref{eq:appendix_prob},
  then for all $t\geq \bar{t}$ it holds that
  \begin{align*}
    \Pr \Big (\Gamma(y(t)) - \Gamma^\star \leq \varepsilon \Big ) \geq 1- \rho,
  \end{align*}
  where $y_0 \in \real^N$ is the initial condition and $\rho \in (0,1)$ is the 
  target confidence.~\oprocend
\label{thm:ucdc}
\end{theorem}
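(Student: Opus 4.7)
The plan is to prove the theorem in three stages: (i) establish a deterministic per-iteration descent of $\Gamma$ along every sample path, (ii) take expectation over the uniform block draw and iterate the resulting recursion to obtain an $O(1/t)$ bound on $\mathbb{E}[\Gamma(y(t)) - \Gamma^\star]$, and (iii) convert this expectation bound into the desired high-confidence statement via Markov's inequality.

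For Step (i), the block-Lipschitz property of $\nabla \Phi$ (Assumption~\ref{ass:smoothness}) yields, for any $y$, any block index $i\in\until{n}$, and any $s\in\real^{N_i}$, the quadratic upper model
\begin{align*}
\Phi(y + U_i s) \leq \Phi(y) + \nabla_i \Phi(y)^\top s + \frac{L_i}{2}\|s\|^2.
\end{align*}
Adding $\psi_i(y_i+s) + \sum_{j\neq i} \psi_j(y_j)$ to both sides and invoking Assumption~\ref{ass:separability} rewrites the right-hand side as $\Gamma(y) + V_i(y,s) - \psi_i(y_i)$. Choosing $s = T^{(i)}(y)$ as in \eqref{eq:UCDC_Ti} and noting that $s=0$ is feasible with value $\psi_i(y_i)$ in that minimization gives the deterministic descent
\begin{align*}
\Gamma\big(y + U_i T^{(i)}(y)\big) - \Gamma(y) \;\leq\; V_i\big(y, T^{(i)}(y)\big) - \psi_i(y_i) \;\leq\; 0.
\end{align*}
Hence along every realization the sequence $\{\Gamma(y(t))\}$ is non-increasing and all iterates lie in the sublevel set $\mathcal{S}_0 := \{y : \Gamma(y) \le \Gamma(y_0)\}$, which (by Assumption~\ref{ass:feasibility} and boundedness of the optimum set along with coercivity induced by the proximal quadratic terms) has finite diameter $\mathcal{R}(y_0)$ in the $L$-weighted norm.

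For Step (ii), taking conditional expectation with respect to the uniform random draw $i_t \in \until{n}$ yields
\begin{align*}
\mathbb{E}\!\left[\Gamma(y(t+1)) \mid y(t)\right] - \Gamma(y(t)) \;\leq\; \frac{1}{n}\sum_{i=1}^n \min_{s_i \in \real^{N_i}} V_i\big(y(t), s_i\big).
\end{align*}
The right-hand side is then majorized in terms of the current suboptimality $\Gamma(y(t)) - \Gamma^\star$ by combining separability of $\Psi$ with the convexity of $\Gamma$: the key observation is that the full direction $y^\star - y(t)$ can be decomposed blockwise, and each block contribution is bounded by $\min_{s_i} V_i(y(t), s_i)$; the $L$-weighted diameter bound $\mathcal{R}(y_0)$ of $\mathcal{S}_0$ then controls the quadratic residual. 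Setting $\xi(t) := \mathbb{E}[\Gamma(y(t))] - \Gamma^\star$, this produces the recursion
\begin{align*}
\xi(t+1) \;\leq\; \xi(t) - \frac{1}{n}\,\min\!\left\{ \frac{\xi(t)^2}{2\,\mathcal{R}^2(y_0)},\; \xi(t) \right\},
\end{align*}
which integrates in closed form to $\xi(t) \leq C/t$ for an explicit constant $C = C\big(n, L_1,\ldots,L_n, \mathcal{R}(y_0), \Gamma(y_0)-\Gamma^\star\big)$.

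For Step (iii), applying Markov's inequality to the nonnegative random variable $\Gamma(y(t)) - \Gamma^\star$ gives
\begin{align*}
\Pr\!\big(\Gamma(y(t)) - \Gamma^\star > \varepsilon\big) \;\leq\; \frac{\xi(t)}{\varepsilon} \;\leq\; \frac{C}{\varepsilon\, t},
\end{align*}
so choosing $\bar t(\varepsilon,\rho) := \lceil C/(\varepsilon\rho)\rceil$ makes the right-hand side at most $\rho$ for every $t \geq \bar t$, which is exactly the claim. The main obstacle lies in Step (ii): closing the recursion requires tightly relating $\mathbb{E}_{i_t}[\min_{s}V_{i_t}(y,s)]$ to the full suboptimality gap, which in turn demands linking independent blockwise proximal steps to a single ``global'' proximal-gradient move toward $y^\star$. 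It is precisely the uniform sampling of blocks, the separability of $\Psi$, and the bounded $L$-weighted diameter of $\mathcal{S}_0$ that together make this comparison possible and prevent the recursion from degenerating.
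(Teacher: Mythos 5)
The paper does not actually prove this statement: Theorem~\ref{thm:ucdc} is quoted verbatim from \cite[Theorem~5]{richtarik2014iteration} ``for completeness,'' and the burden of proof is delegated entirely to that reference. So your proposal is not an alternative to the paper's argument but a reconstruction of the external one, and as a reconstruction it is essentially faithful: the deterministic block descent in Step~(i), the passage to the expected recursion $\xi(t+1)\le \xi(t)-\tfrac{1}{n}\min\{\xi(t)^2/(2\mathcal{R}^2(y_0)),\xi(t)\}$ via the comparison of $\tfrac{1}{n}\sum_i\min_s V_i(y,s)$ with a single full proximal step along $\lambda(y^\star-y)$, and the $O(n/t)$ integration are exactly the skeleton of Richt\'arik--Tak\'a\v{c}'s proof. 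Two points are worth flagging. First, your Step~(iii) replaces their refined supermartingale/restarting argument with plain Markov's inequality; this is legitimate and suffices for the statement as written (which only asserts existence of $\bar t(\varepsilon,\rho)$), but it yields $\bar t = O\big(C/(\varepsilon\rho)\big)$ rather than the $O\big((C/\varepsilon)\log(1/\rho)\big)$ dependence of the original, so you are proving a genuinely weaker quantitative result that happens to imply the same qualitative claim. Second, your justification for the finiteness of the $L$-weighted level-set radius $\mathcal{R}(y_0)$ is not correct as stated: the ``proximal quadratic terms'' live in the algorithm's subproblems, not in $\Gamma$, and Assumption~\ref{ass:feasibility} (nonempty minimizer set) does not by itself bound the sublevel set $\{y:\Gamma(y)\le\Gamma(y_0)\}$. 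Finiteness of $\mathcal{R}(y_0)$ is an implicit hypothesis of the cited theorem as well (the constant in their bound is vacuous otherwise), so this does not invalidate your argument relative to the statement being proved, but you should either assume it explicitly or derive it from additional structure (e.g., coercivity of $\Gamma$ itself) rather than assert it. Also note the minor bookkeeping slip in your displayed expected-descent inequality, where the subtracted terms $\psi_i(y_i)$ from Step~(i) have been dropped; they are needed to arrive at the clean form $\mathbb{E}[\Gamma(y(t+1))\mid y(t)]\le \tfrac{n-1}{n}\Gamma(y(t))+\tfrac{1}{n}H(y(t))$ on which the recursion rests.
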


\subsection{Analysis of the synchronous algorithm} 
\label{subsec:synch_analysis}
We start this section by recalling some useful properties of conjugate
functions that will be useful for the convergence analysis of the proposed
distributed algorithms.

\begin{lemma}[\cite{boyd2004convex,beck2014fast}]
  Let $\varphi$ be a closed, strictly convex function and $\varphi^*$ its
  conjugate function.
  Then 
  \begin{align}
    \notag
    \nabla \varphi^*(y) \!=\! \argmax_x \left\{ y^\top x \!-\!\varphi(x)\right\} 
    \!=\!  \argmin_x \left\{ \varphi(x) \!-\! y^\top x \right\}\!. 
  \end{align}
  Moreover, if $\varphi$ is strongly convex with convexity parameter $\sigma$, then 
  $\nabla \varphi^*$ is Lipschitz continuous with Lipschitz constant given by 
  $\frac{1}{\sigma}$. \oprocend  
\label{lem:conjugate_gradient}
\end{lemma}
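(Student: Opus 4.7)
The plan is to prove the two claims separately, using classical Fenchel duality for the gradient identification, and the subgradient inequality for strongly convex functions to obtain the Lipschitz bound.

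For the first claim, I would start from the definition $\varphi^*(y) = \sup_x \{ y^\top x - \varphi(x)\}$. Since $\varphi$ is proper, closed and strictly convex, for any $y$ where the supremum is finite it is attained at a unique point, which I call $x^\star(y)$. The Fenchel--Young equality characterizes this maximizer through the subdifferential: $x^\star$ attains the supremum if and only if $y \in \partial \varphi(x^\star)$, which is equivalent to $x^\star \in \partial \varphi^\star(y)$. Because $\varphi$ is strictly convex, $\varphi^*$ is differentiable on the interior of its domain and the subdifferential reduces to a singleton, namely $\{\nabla \varphi^*(y)\}$. This forces $\nabla \varphi^*(y) = x^\star(y)$, and rewriting $\argmax_x \{y^\top x - \varphi(x)\} = \argmin_x \{\varphi(x) - y^\top x\}$ yields the first display.

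For the Lipschitz bound, let $y_1, y_2$ be arbitrary in $\dom \varphi^*$ and set $x_i := \nabla \varphi^*(y_i)$, so that by the previous paragraph $y_i \in \partial \varphi(x_i)$. The $\sigma$-strong convexity of $\varphi$ then gives the two subgradient inequalities
\begin{align*}
\varphi(x_2) &\geq \varphi(x_1) + y_1^\top (x_2 - x_1) + \tfrac{\sigma}{2}\|x_2 - x_1\|^2, \\
\varphi(x_1) &\geq \varphi(x_2) + y_2^\top (x_1 - x_2) + \tfrac{\sigma}{2}\|x_1 - x_2\|^2.
\end{align*}
Adding these two bounds and cancelling $\varphi(x_1) + \varphi(x_2)$ yields the cocoercivity-type estimate $(y_1 - y_2)^\top (x_1 - x_2) \geq \sigma \|x_1 - x_2\|^2$. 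Applying Cauchy--Schwarz to the left-hand side and dividing by $\sigma \|x_1 - x_2\|$ (the case $x_1 = x_2$ being trivial) produces $\|x_1 - x_2\| \leq \tfrac{1}{\sigma} \|y_1 - y_2\|$, which is exactly the desired Lipschitz bound on $\nabla \varphi^*$.

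The only delicate point is to justify that strict (resp.\ strong) convexity of $\varphi$ suffices for $\varphi^*$ to be finite-valued and differentiable on the domain where we need to apply the gradient formula; this is standard since $\varphi$ is assumed closed and proper, so Fenchel biconjugation applies. Everything else is a routine application of the subgradient calculus and the Fenchel--Young equality, so no additional machinery beyond what is already invoked in the paper is required.
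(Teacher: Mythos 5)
The paper does not prove this lemma at all: it is stated as a known result imported from the cited references (Boyd--Vandenberghe and Beck--Teboulle) and closed with an end-of-statement marker, so there is no in-paper argument to compare against. Your proof is the standard self-contained one and it is correct: the identification $\nabla\varphi^*(y)=\argmax_x\{y^\top x-\varphi(x)\}$ via the Fenchel--Young equality plus single-valuedness of $\partial\varphi^*$ under strict convexity of $\varphi$, and the Lipschitz bound via strong monotonicity of $\partial\varphi$ (adding the two strong-convexity subgradient inequalities, then Cauchy--Schwarz) are exactly the textbook route. The one point worth tightening is the first part under \emph{mere} strict convexity: closedness and properness of $\varphi$ alone do not guarantee that the supremum defining $\varphi^*(y)$ is finite or attained for every $y$, so the gradient formula should be read as holding on the set where the maximizer exists (equivalently, where $\partial\varphi^*(y)\neq\emptyset$); invoking biconjugation does not by itself close this gap. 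This caveat is immaterial for the way the lemma is used in the paper, since there each $f_i$ is $\sigma_i$-strongly convex, which makes $f_i^*$ finite-valued and differentiable on all of $\real^d$ and the argmin attained and unique.
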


In the next lemma we establish some important properties of
problem~\eqref{eq:dual_min_problem} that will be useful to analyze the proposed
distributed dual proximal algorithms.
\begin{lemma}
  Let $\Phi(y) := F^*(y)$ and $\Psi(y) := G^*(y)$ consistently with the notation
  of problem~\eqref{eq:appendix_prob} in
  Appendix~\ref{app:weig_prox}. Problem~\eqref{eq:dual_min_problem} satisfies
  Assumption~\ref{ass:smoothness} (block Lipschitz continuity of $\nabla\Phi$),
  with (block) Lipschitz constants given by
  \begin{align*}
    L_i = \sqrt{ \frac{1}{\sigma_i^2} + \sum_{j\in \nbrs_i}
        \Big(\frac{1}{\sigma_i} + \frac{1}{\sigma_j}\Big)^2},  \quad i \in \until{n},
  \end{align*}
  Assumption~\ref{ass:separability} (separability of $\Psi$) and Assumption~\ref{ass:feasibility}
  (feasibility). \oprocend
  \label{lem:dual_pb_prop}
\end{lemma}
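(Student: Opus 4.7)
The plan is to verify the three requirements of problem~\eqref{eq:appendix_prob} separately: block Lipschitz continuity of $\nabla \Phi = \nabla F^*$, separability of $\Psi = G^*$, and feasibility. Feasibility (Assumption~\ref{ass:feasibility}) follows directly from Assumption~\ref{ass:Slater}: the strong convexity of each $f_i$ guarantees that the primal~\eqref{eq:primal_problem} admits a (unique) optimizer, and the constraint qualification ensures strong duality, so the dual~\eqref{eq:dual_problem} (equivalently \eqref{eq:dual_min_problem}) attains its optimum. Separability of $\Psi$ (Assumption~\ref{ass:separability}) is essentially by inspection: since $G^*(y) = \sum_{i=1}^n g_i^*(\mu_i)$ and $\mu_i$ is the $\mu$-subblock of $y_i = [\Lambda_i^\top\ \mu_i^\top]^\top$, setting $\psi_i(y_i) := g_i^*(\mu_i)$ yields $G^*(y) = \sum_i \psi_i(y_i)$, and each $\psi_i$ inherits properness, closedness, and convexity from the conjugate $g_i^*$.

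The substantive work lies in the block-Lipschitz bound (Assumption~\ref{ass:smoothness}). My first step is to compute $\nabla_i F^*(y)$ in closed form. Writing $w_k := -\sum_{\ell\in\nbrs_k}(\lambda_k^\ell - \lambda_\ell^k) - \mu_k$ for the argument of $f_k^*$ in $F^*$, the variable $\lambda_i^j$ (for $j\in\nbrs_i$) appears in $w_i$ with coefficient $-1$ and in $w_j$ with coefficient $+1$, while $\mu_i$ appears only in $w_i$ with coefficient $-1$. By the chain rule,
\begin{align*}
\frac{\partial F^*(y)}{\partial \lambda_i^j} = -\nabla f_i^*(w_i) + \nabla f_j^*(w_j), \qquad \frac{\partial F^*(y)}{\partial \mu_i} = -\nabla f_i^*(w_i).
\end{align*}
By Lemma~\ref{lem:conjugate_gradient}, each $\nabla f_k^*$ is Lipschitz with constant $1/\sigma_k$.

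Next, I perturb $y_i$ by $s_i \in \real^{d|\nbrs_i|+d}$, decomposed into sub-blocks $(s_i^j)_{j\in\nbrs_i}$ and $s_i^\mu$. Only $w_i$ and the $w_j$ for $j\in\nbrs_i$ are affected, with $\delta w_i = -\sum_{j\in\nbrs_i}s_i^j - s_i^\mu$ and $\delta w_j = s_i^j$; all other $w_k$ are unchanged. Defining $\Delta g_k := \nabla f_k^*(w_k+\delta w_k) - \nabla f_k^*(w_k)$, the Lipschitz bound gives $\|\Delta g_k\|\le (1/\sigma_k)\|\delta w_k\|$. The squared norm of the block gradient change then decomposes as
\begin{align*}
\|\nabla_i F^*(y+U_is_i)-\nabla_i F^*(y)\|^2 = \|\Delta g_i\|^2 + \sum_{j\in\nbrs_i}\|\Delta g_j - \Delta g_i\|^2,
\end{align*}
with the first term coming from the $\mu_i$ component and the sum from the $\lambda_i^j$ components. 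I would then apply the triangle inequality $\|\Delta g_j - \Delta g_i\|\le (1/\sigma_i)\|\delta w_i\| + (1/\sigma_j)\|s_i^j\|$, upper-bound both $\|\delta w_i\|$ and $\|s_i^j\|$ by $\|s_i\|$ using the block structure of $s_i$, and collect terms: the $\mu_i$-contribution yields $1/\sigma_i^2$, and each neighbor $j$ contributes $(1/\sigma_i + 1/\sigma_j)^2$, assembling precisely to $L_i^2 \|s_i\|^2$ with the stated $L_i$.

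The main obstacle is the careful bookkeeping of the cross-terms in the triangle inequality so that each neighbor $j$ contributes only its own $1/\sigma_j$ paired with a single $1/\sigma_i$, rather than an over-counted sum; this clean separation, enabled by the fact that in $\nabla_i F^*$ the gradient $\nabla f_j^*(w_j)$ enters only through the single $\lambda_i^j$ component, is what allows the bound to collapse to the compact closed form $L_i$.
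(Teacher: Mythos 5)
Your proposal follows essentially the same route as the paper's proof: compute the block gradient $\nabla_i F^*$ explicitly, invoke Lemma~\ref{lem:conjugate_gradient} to obtain the $1/\sigma_k$-Lipschitz continuity of each $\nabla f_k^*$, bound each sub-component (the $\mu_i$-part and each $\lambda_i^j$-part) via the triangle inequality, and aggregate the sub-component constants in the Euclidean norm to assemble $L_i$; the separability and feasibility parts likewise coincide with the paper's (inspection of $G^*$, and strong duality from Assumption~\ref{ass:Slater}). The one step that deserves a caveat is your inequality $\|\delta w_i\|\le\|s_i\|$ for $\delta w_i=-\sum_{j\in\nbrs_i}s_i^j-s_i^\mu$: this is not a ``sub-block of $s_i$'' but an $\ell_1$-type sum of $|\nbrs_i|+1$ sub-blocks, so in general one only gets $\|\delta w_i\|\le\sqrt{|\nbrs_i|+1}\,\|s_i\|$, which would inflate the constant by degree-dependent factors. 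This, however, is exactly the point the paper's own proof also elides when it asserts that $\nabla_{\lambda_i^j}F^*$ is Lipschitz with constant $1/\sigma_i+1/\sigma_j$ and $\nabla_{\mu_i}F^*$ with constant $1/\sigma_i$ under a block-$i$ perturbation; your write-up is in fact more explicit about the perturbation bookkeeping, and matches the paper's argument in both structure and level of rigor.
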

\begin{proof}
The proof is split into blocks, one for each assumption.

\noindent\textbf{Block Lipschitz continuity of $\nabla F^*$:} We show that the gradient of $F^*$ is
block coordinate-wise Lipschitz continuous. The $i$-th block component of $\nabla F^*$
is 
\begin{align}
  \notag
  \nabla_i F^*(y) = 
  \begin{bmatrix}
    \nabla_{\Lambda_i} F^*(y)\\
    \nabla_{\mu_i}F^*(y)
  \end{bmatrix},
\end{align}
where the block-component of $\nabla_{\Lambda_i} F^*$ associated to neighbor $j$
is given by $\nabla_{\lambda_i^j} F^*$ and is equal to
\begin{align*}
  \nabla_{\lambda_i^j} F^*(y) &= \nabla f_i^* 
  \bigg(-
    \sum_{k\in \nbrs_i} \Big( \lambda_i^k - \lambda_k^i \Big) - \mu_i 
  \bigg)
  \\ &\hspace{1.2cm}
  -\nabla f_j^* 
  \bigg( -
    \sum_{k\in \nbrs_j} \Big( \lambda_j^k - \lambda_k^j \Big) - \mu_j 
  \bigg).
\end{align*}
By Lemma~\ref{lem:conjugate_gradient} both $\nabla f_i^*$ and $\nabla f_j^*$ are Lipschitz 
continuous with Lipschitz constants $\frac{1}{\sigma_i}$ and $\frac{1}{\sigma_j}$ respectively,
thus also $\nabla_{\lambda_i^j} F^*$ is Lipschitz continuous with constant
$L_{ij} = \frac{1}{\sigma_i} + \frac{1}{\sigma_j}$.
\\
By using the (Euclidean) $2$-norm, we have that $\nabla_{\Lambda_i} F^*(y)$ is
Lipschitz continuous with constant $\sqrt{\sum_{j\in \nbrs_i} L_{ij}^2}$.
\\
Similarly, the gradient of $F^*$ with respect to $\mu_i$ is 
\begin{align*}
  \nabla_{\mu_i} F^*(y) = \nabla f_i^* \bigg( -\sum_{k\in \nbrs_i} 
  \Big( \lambda_i^k - \lambda_k^i \Big) - \mu_i \bigg)
\end{align*}
and is Lipschitz continuous with constant $\frac{1}{\sigma_i}$.
Finally, we conclude that $\nabla_i F^*(y)$ is Lipschitz continuous with
constant 
\begin{align*}
  L_i = \sqrt{ \frac{1}{\sigma_i^2}+\sum_{j\in \nbrs_i} L_{ij}^2 } 
  = \sqrt{ \frac{1}{\sigma_i^2} + \sum_{j\in \nbrs_i} \Big(\frac{1}{\sigma_i} + \frac{1}{\sigma_j}\Big)^2 }.
\end{align*}

\noindent\textbf{Separability of $G^*$:}
By definition $G^*(y) = \sum_{i=1}^n g_i^*(\mu_i)$, where $\mu_i$ is a component
of the block $y_i$. Thus, denoting $G_i^*(y_i) := g_i^*(\mu_i)$, it follows
immediately $G^*(y) = \sum_{i=1}^n g_i^*(\mu_i) = \sum_{i=1}^n G_i^*(y_i)$.\\[1.2ex]
  \noindent\textbf{Feasibility:} From Assumption~\ref{ass:Slater} and the
  convexity condition on $f_i$ and $g_i$, strong duality holds, i.e., dual
  problem~\eqref{eq:dual_min_problem} is feasible and admits at least a
  minimizer $y^\star$.%
\end{proof}

Next, we recall how the proximal operators of a function and its conjugate are
related.

\begin{lemma}[Moreau decomposition, \cite{bauschke2011convex}]
  Let $\!\varphi : \real^d \rightarrow \real\cup \!\{+\infty \}$ be a closed,
  convex function and $\varphi^*$ its conjugate. Then, 
  $\forall x\in \real^d$, $x = \prox_\varphi (x) + \prox_{\varphi^*}(x)$. \oprocend

  \label{lem:Moreau_decomposition}
\end{lemma}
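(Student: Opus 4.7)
The plan is to identify $\prox_\varphi(x)$ and $\prox_{\varphi^*}(x)$ via their first-order optimality conditions and then bridge the two using the standard subgradient inversion identity for conjugate pairs, namely $v \in \partial \varphi(u) \iff u \in \partial \varphi^*(v)$.

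First, I would set $u := \prox_\varphi(x)$. By definition, $u$ is the unique minimizer of $y \mapsto \varphi(y) + \tfrac{1}{2}\|y - x\|^2$, which is strongly convex. The corresponding first-order optimality condition (in the sense of convex subdifferential calculus) reads
\begin{align*}
0 \in \partial \varphi(u) + (u - x), \qquad \text{i.e.,} \qquad x - u \in \partial \varphi(u).
\end{align*}
Define $v := x - u$, so the above becomes $v \in \partial \varphi(u)$.

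Next, I would invoke the Fenchel--Young subgradient inversion rule: since $\varphi$ is closed, proper and convex, $\varphi^{**} = \varphi$, and this is equivalent to
\begin{align*}
u \in \partial \varphi^*(v),
\end{align*}
which, rewriting $u = x - v$, gives $x - v \in \partial \varphi^*(v)$, or equivalently $0 \in \partial \varphi^*(v) + (v - x)$. This is precisely the first-order optimality condition characterizing the (unique) minimizer of $y \mapsto \varphi^*(y) + \tfrac{1}{2}\|y - x\|^2$, so $v = \prox_{\varphi^*}(x)$. Combining $u = \prox_\varphi(x)$ and $v = \prox_{\varphi^*}(x)$ with $u + v = x$ yields the claimed decomposition.

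The only non-trivial step is the subgradient inversion identity, which relies on closedness and convexity of $\varphi$ (so that $\varphi^{**} = \varphi$) and on the Fenchel--Young equality $\varphi(u) + \varphi^*(v) = u^\top v$ holding exactly on the graph of $\partial \varphi$. Since these properties are assumed in the hypotheses of the lemma, no additional obstacle arises; uniqueness of the two proximal minimizers is automatic from the strong convexity of the respective proximal objectives.
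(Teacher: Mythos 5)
The paper gives no proof of this lemma; it is stated as a known result imported from \cite{bauschke2011convex}. Your argument is the standard self-contained proof --- characterize $u=\prox_\varphi(x)$ by $x-u\in\partial\varphi(u)$, apply the conjugate subgradient inversion $v\in\partial\varphi(u)\iff u\in\partial\varphi^*(v)$ (valid because $\varphi$ is closed, proper and convex, so $\varphi^{**}=\varphi$), and recognize the resulting inclusion as the optimality condition defining $\prox_{\varphi^*}(x)$ --- and it is correct; the only hypothesis you use beyond what the lemma states explicitly is properness of $\varphi$, which is implicitly needed anyway for $\varphi^*$ and the proximal operators to be well defined, and which you rightly flag.
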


\begin{lemma}[Extended Moreau decomposition]
  Let $\varphi : \real^d \rightarrow \real\cup \{+\infty \}$ be a closed,
  convex function and $\varphi^*$ its conjugate. Then, for any $x\in
  \real^d$ and $\alpha>0$, it holds 
  \begin{align*}
    x = \prox_{\alpha \varphi} \big( x \big) + 
    \alpha \prox_{\frac{1}{\alpha} \varphi^*} \Big( \frac{x}{\alpha} \Big).
  \end{align*}%
  \label{lem:extended_Moreau}%
\end{lemma}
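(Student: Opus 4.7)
The plan is to reduce the extended decomposition to the standard Moreau decomposition (Lemma~\ref{lem:Moreau_decomposition}) by a simple rescaling of the function. The key observation is that $\prox_{\alpha\varphi}$ is itself a proximal operator of a (closed convex) function, namely $\psi := \alpha\varphi$, so Moreau's identity applies verbatim to $\psi$, and the only work left is to rewrite $\prox_{\psi^*}$ in terms of $\prox_{(1/\alpha)\varphi^*}$.

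First, I would define $\psi(u) := \alpha\varphi(u)$, which is closed, proper and convex since $\alpha>0$ and $\varphi$ is. A direct calculation from the definition of conjugate gives
\begin{align*}
\psi^*(v) = \sup_u \bigl\{ v^\top u - \alpha\varphi(u)\bigr\}
= \alpha\, \sup_u \Bigl\{ \tfrac{v}{\alpha}^\top u - \varphi(u)\Bigr\}
= \alpha\,\varphi^*\!\Bigl(\tfrac{v}{\alpha}\Bigr).
\end{align*}
Applying Lemma~\ref{lem:Moreau_decomposition} to $\psi$ at the point $x$ yields
\begin{align*}
x = \prox_{\psi}(x) + \prox_{\psi^*}(x)
= \prox_{\alpha\varphi}(x) + \prox_{\alpha\varphi^*(\cdot/\alpha)}(x),
\end{align*}
where the first equality uses $\prox_\psi = \prox_{\alpha\varphi}$ directly from the definition of the proximal operator in the Notation paragraph.

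It remains to identify $\prox_{\alpha\varphi^*(\cdot/\alpha)}(x)$ with $\alpha\,\prox_{(1/\alpha)\varphi^*}(x/\alpha)$. This is the only nontrivial step, and I would carry it out with the change of variables $u = \alpha v$:
\begin{align*}
\prox_{\alpha\varphi^*(\cdot/\alpha)}(x)
&= \argmin_u \Bigl\{ \alpha\varphi^*\!\bigl(\tfrac{u}{\alpha}\bigr) + \tfrac{1}{2}\|u-x\|^2\Bigr\} \\
&= \alpha\,\argmin_v \Bigl\{ \alpha\varphi^*(v) + \tfrac{\alpha^2}{2}\bigl\|v-\tfrac{x}{\alpha}\bigr\|^2\Bigr\} \\
&= \alpha\,\argmin_v \Bigl\{ \varphi^*(v) + \tfrac{1}{2(1/\alpha)}\bigl\|v-\tfrac{x}{\alpha}\bigr\|^2\Bigr\}
= \alpha\,\prox_{\frac{1}{\alpha}\varphi^*}\!\Bigl(\tfrac{x}{\alpha}\Bigr),
\end{align*}
where the penultimate equality divides the objective by $\alpha$ (which does not change the $\argmin$) and the last equality is the definition of the proximal operator. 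Combining this with the display above gives the claimed identity.

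The main obstacle is purely bookkeeping: one must rescale the quadratic term correctly when changing variables, and make sure the factor of $\alpha$ coming out of the $\argmin$ lines up with the $\alpha$ multiplying $\prox_{(1/\alpha)\varphi^*}$ in the statement. No convex-analytic subtlety beyond closedness/properness of $\varphi$ is needed, and Assumption~\ref{ass:Slater} plays no role here; the result is a general fact about proximal operators of conjugates.
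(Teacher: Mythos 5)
Your proof is correct and follows essentially the same route as the paper's: rescale to $h=\alpha\varphi$, apply the standard Moreau decomposition, compute $h^*(v)=\alpha\varphi^*(v/\alpha)$, and convert $\prox_{h^*}$ into $\alpha\,\prox_{(1/\alpha)\varphi^*}(\cdot/\alpha)$ by a change of variables. The only difference is that you spell out the change-of-variables calculation that the paper compresses into ``standard algebraic properties from minimization,'' which is a welcome addition rather than a deviation.
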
%
\begin{proof}
  Let $h(x) = \alpha \varphi(x)$, then from the Moreau decomposition in
  Lemma~\ref{lem:Moreau_decomposition}, it holds $x = \prox_h(x) +
  \prox_{h^*}(x)$.
  To prove the result we simply need to compute $\prox_{h^*}(x)$ in 
  terms of $\varphi^*$.
  First, from the definition of conjugate function we obtain $h^*(x) = \alpha \varphi^*\left ( \frac{x}{\alpha} \right )$.
  Then, by using the definition of proximal operator and standard algebraic properties
  from minimization, it holds true that $\prox_{h^*}(x) =\alpha
  \prox_{\frac{1}{\alpha} \varphi^*} \left ( \frac{x}{\alpha} \right )$, so that
  the proof follows.
\end{proof}

The next lemma shows how the (weighted) proximal operator of $G^*$ can be split
into local proximal operators that can be independently carried out by each
single node.

\begin{lemma}
  Let $y = [y_1^\top \; \ldots \; y_n^\top]^\top \in \real^{n(D+d)}$ where
  $y_i = [\Lambda_i^\top\;\mu_i^\top]^\top$ with $\Lambda_i \in \real^D$ and $\mu_i \in\real^d$,
  $i\in\until{n}$.  Let $G^*(y) = \sum_{i=1}^n g_i^*(\mu_i)$, then for a diagonal
  weight matrix $D_\alpha = \diag(\alpha_1, \ldots, \alpha_n) >0$, the proximal
  operator $\prox_{D_\alpha, G^*}$ 
  evaluated at $y$ is given by
  \begin{align}
    \notag
    \prox_{D_\alpha, G^*} \big( y \big)
    & =
    \begin{bmatrix}
      \Lambda_1 \\
      \prox_{\alpha_1 g_1^*}( \mu_1)
      \\ \vdots \\
      \Lambda_n \\
      \prox_{\alpha_n g_n^*}( \mu_n)
    \end{bmatrix}.
  \end{align}%
  \oprocend
  
  \label{lem:proxG}
\end{lemma}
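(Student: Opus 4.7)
The plan is to verify the claim directly from the definition of the weighted proximal operator, by exploiting the block-separable structure of both the function $G^*$ and the weighted quadratic penalty, and then reducing each block's contribution to a scalar (resp.\ vector) proximal operator.

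First, I would make explicit the block structure induced by $y$. The matrix $D_\alpha$ is naturally interpreted as the block-diagonal matrix $\diag(\alpha_1 I_{D+d}, \ldots, \alpha_n I_{D+d})$, so that $\|x - y\|^2_{D_\alpha^{-1}} = \sum_{i=1}^n \tfrac{1}{\alpha_i}\|x_i - y_i\|^2$. Writing $x_i = [\bar\Lambda_i^\top, \bar\mu_i^\top]^\top$ and using the definition of the generalized proximal operator from the Notation paragraph, together with $G^*(x) = \sum_{i=1}^n g_i^*(\bar\mu_i)$, the minimization defining $\prox_{D_\alpha, G^*}(y)$ becomes
\begin{align*}
\argmin_{x_1,\ldots,x_n} \sum_{i=1}^n \Big( g_i^*(\bar\mu_i) + \tfrac{1}{2\alpha_i}\|\bar\Lambda_i - \Lambda_i\|^2 + \tfrac{1}{2\alpha_i}\|\bar\mu_i - \mu_i\|^2 \Big).
\end{align*}

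Second, I would observe that the objective is fully separable across the index $i$, so the problem decouples into $n$ independent sub-problems, one per node. Within each sub-problem, the cost further splits into a pure quadratic in $\bar\Lambda_i$ (with no coupling to $\bar\mu_i$) plus a term depending only on $\bar\mu_i$. The $\bar\Lambda_i$ piece $\tfrac{1}{2\alpha_i}\|\bar\Lambda_i - \Lambda_i\|^2$ is uniquely minimized at $\bar\Lambda_i = \Lambda_i$, while the $\bar\mu_i$ piece $g_i^*(\bar\mu_i) + \tfrac{1}{2\alpha_i}\|\bar\mu_i - \mu_i\|^2$ coincides, by the definition of the (scalar-weighted) proximal operator, with $\prox_{\alpha_i g_i^*}(\mu_i)$.

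Assembling the per-block minimizers into the stacked vector yields exactly the right-hand side claimed in the lemma. The argument is essentially a direct verification, so I do not anticipate a real obstacle; the only point that deserves care is the interpretation of the diagonal $D_\alpha$ as a block-diagonal matrix acting on the composite vector $y$, which guarantees that the weighted quadratic penalty is itself block-separable and thus compatible with the separability of $G^*$.
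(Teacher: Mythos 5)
Your proposal is correct and follows essentially the same route as the paper's own proof: expand the definition of $\prox_{D_\alpha, G^*}$, use the separability of $G^*$ and of the weighted norm to decouple the minimization into $n$ independent blocks, note that the $\Lambda_i$-part is a pure quadratic minimized at $\Lambda_i$, and recognize the $\mu_i$-part as $\prox_{\alpha_i g_i^*}(\mu_i)$. Your explicit remark that $D_\alpha$ acts block-diagonally (as $\diag(\alpha_1 I_{D+d},\ldots,\alpha_n I_{D+d})$) is a useful clarification that the paper leaves implicit.
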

\begin{proof}
Let $\eta = [\eta_1^\top \; \ldots \; \eta_n^\top ]^\top \in \real^{n(D+d)}$, with
$\eta_i = [u_i^\top \; v_i^\top]^\top$, $u_i \in \real^D$ and $v_i \in\real^d$, be a
variable with the same block structure of
$y = [y_1^\top \; \ldots \; y_n^\top ]^\top \in \real^{n(D+d)}$, with
$y_i = [\Lambda_i^\top \; \mu_i^\top]^\top$ (as defined in \eqref{eq:yi}).
\\
By using the definition of weighted proximal operator and the separability of
both $G^*$ and the norm function, we have
\begin{align*}
  & \prox_{D_\alpha, G^*} \big( y \big) := 
  \argmin_{ \eta \in \real^{n(D+d)}} \bigg\{
    G^*(\eta) + \dfrac{1}{2} \Big\| \eta - y \Big\|^2_{D_\alpha^{-1}}
  \bigg\} 
  \\
  & \!=\! \argmin_{\eta}
  \bigg\{ \! \sum_{i=1}^n \! \bigg( \!
    g_i^*(v_i) + \frac{1}{2 \alpha_i} \| u_i \!-\! \Lambda_i \|^2 \!+\! \frac{1}{2 \alpha_i} \| v_i \!-\! \mu_i \|^2
  \!\bigg) \! \bigg\}.
\end{align*}
The minimization splits on each component $\eta_i$ of $\eta$, giving
\begin{align}
  \notag
  \prox_{D_\alpha, G^*} \big( y \big)
  & = 
  \begin{bmatrix}
    \displaystyle \argmin_{u_1} \|  u_1 - \Lambda_1 \|^2 
    \\
    \displaystyle \argmin_{ v_1} \Big\{ g_1^*( v_1 ) + \dfrac{1}{2 \alpha_1} \|  v_1 - \mu_1 \|^2 \Big\}
    \\ \vdots \\
    \displaystyle \argmin_{u_n} \|  u_n - \Lambda_n \|^2 
    \\
    \displaystyle \argmin_{ v_n } \Big\{ g_n^*( v_n ) + \dfrac{1}{2 \alpha_n} \|  v_n - \mu_n \|^2 \Big\}
  \end{bmatrix}
 \end{align}  
so that the proof follows from the definition of proximal operator. 
 \end{proof}

 We are ready to show the convergence of the \DDPGfixed/ introduced in
 Algorithm~\ref{alg:fixed}.
\begin{theorem}
  For each $i\in\until{n}$, let $f_i$ be a proper, closed and strongly convex
  extended real-valued function with strong convexity parameter $\sigma_i>0$,
  and let $g_i$ be a proper convex extended real-valued function. 
  Suppose that in
  Algorithm~\ref{alg:fixed} the local step-size $\alpha_i$ is chosen such that 
  $0<\alpha_i \leq \frac{1}{n L_i}$, with $L_i$ given by
  \begin{align}
    L_i = \sqrt{ \frac{1}{\sigma_i^2} + \sum_{j\in \nbrs_i}
        \Big(\frac{1}{\sigma_i} + \frac{1}{\sigma_j}\Big)^2}, \quad \forall\, i \in \until{n}.
  \label{eq:L_i}
  \end{align}
  Then the sequence $y(t) = [y_1(t)^\top \ldots y_n(t)^\top]^\top$ generated by the 
  \DDPGfixed/ (Algorithm~\ref{alg:fixed}) satisfies
  \begin{align}
    \notag
    \Gamma(y(t)) - \Gamma(y^\star) \leq \frac{ \big\| y_0 - y^\star \big\|^2_{D_\alpha^{-1}}}{ 2t },
  \end{align}
  where $y^\star$ is any minimizer of~\eqref{eq:dual_min_problem},
  $y_0 = [y_1(0)^\top \; \ldots \; y_n(0)^\top]^\top$ is the initial condition and 
  $D_\alpha := \diag(\alpha_1,\ldots,\alpha_n)$.

  \label{thm:fixed}
\end{theorem}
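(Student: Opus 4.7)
The plan is to recognize the \DDPGfixed/ as an instance of the weighted proximal gradient iteration~\eqref{eq:weighted_PG}, with $\Phi = F^*$, $\Psi = G^*$, and $W = D_\alpha = \diag(\alpha_1,\dots,\alpha_n)$ (acting block-diagonally, each $\alpha_i$ scaling the $y_i$-block) applied to the dual problem~\eqref{eq:dual_min_problem}. Once this equivalence is established, Theorem~\ref{thm:weighted_PG} applies directly, since Lemma~\ref{lem:dual_pb_prop} certifies block Lipschitz continuity of $\nabla F^*$ with the stated constants $L_i$, separability of $G^*$, and feasibility of the dual, while the hypothesis $\alpha_i \le 1/(nL_i)$ matches the weight condition $w_i \le 1/(nL_i)$ required by the Generalized Descent Lemma~\ref{lem:gen_descent}.

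The identification itself proceeds in two matching steps. First, Lemma~\ref{lem:conjugate_gradient} shows that the local $x_i^\star(t)$ produced by the minimization in Algorithm~\ref{alg:fixed} equals $\nabla f_i^*\!\big(-\sum_{j\in\nbrs_i}(\lambda_i^j(t)-\lambda_j^i(t))-\mu_i(t)\big)$. Substituting this identity into the block-gradient expressions for $F^*$ derived inside the proof of Lemma~\ref{lem:dual_pb_prop} gives $\nabla_{\lambda_i^j}F^*(y(t-1)) = x_j^\star(t-1)-x_i^\star(t-1)$ and $\nabla_{\mu_i}F^*(y(t-1)) = -x_i^\star(t-1)$, so that the $\lambda_i^j(t)$ update and the auxiliary $\tilde\mu_i$ assignment of Algorithm~\ref{alg:fixed} realize exactly the gradient step $y(t-1) - D_\alpha\nabla F^*(y(t-1))$ block-wise. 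Second, by Lemma~\ref{lem:proxG} the weighted prox $\prox_{D_\alpha,G^*}$ leaves the $\Lambda_i$-components untouched and acts on each $\mu_i$-component as $\prox_{\alpha_i g_i^*}$; the Extended Moreau decomposition (Lemma~\ref{lem:extended_Moreau}) then rewrites
\begin{align*}
    \prox_{\alpha_i g_i^*}(\tilde\mu_i) = \tilde\mu_i - \alpha_i\,\prox_{\frac{1}{\alpha_i}g_i}\!\Big(\frac{\tilde\mu_i}{\alpha_i}\Big),
\end{align*}
which is verbatim the $\mu_i(t)$ update used by the algorithm. Combining the two steps, the sequence $\{y(t)\}$ produced by Algorithm~\ref{alg:fixed} coincides, starting from $y_0$, with the iteration generated by~\eqref{eq:weighted_PG}.

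With the identification in hand, Theorem~\ref{thm:weighted_PG} delivers the advertised rate bound immediately. The main obstacle is bookkeeping rather than analysis: in the first matching step one must carefully track that each variable $\lambda_i^j$ appears in the arguments of both $f_i^*$ and $f_j^*$ inside $F^*$ (since $\GG$ is undirected) so that the $\pm 1$ coefficients combine to produce the clean difference $x_i^\star-x_j^\star$ with the correct overall sign; in the second step, the Moreau rewriting of $\prox_{\alpha_i g_i^*}$ must be performed at exactly the scaled argument $\tilde\mu_i/\alpha_i$, so that the proximal operator of $g_i$ (rather than that of its conjugate $g_i^*$) is what each node actually evaluates, as prescribed by Algorithm~\ref{alg:fixed}.
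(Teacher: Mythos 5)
Your proposal is correct and follows essentially the same route as the paper's own proof: identify Algorithm~\ref{alg:fixed} with the weighted proximal gradient iteration~\eqref{eq:weighted_PG} for $W=D_\alpha$, invoke Lemma~\ref{lem:dual_pb_prop} to meet the hypotheses of Theorem~\ref{thm:weighted_PG}, use Lemma~\ref{lem:conjugate_gradient} to recognize $x_i^\star$ as $\nabla f_i^*$ evaluated at the appropriate argument, and then combine Lemma~\ref{lem:proxG} with the extended Moreau decomposition to recover the node-wise $\mu_i$ update. Your sign bookkeeping for $\nabla_{\lambda_i^j}F^*$ is consistent with the update actually used in the paper's proof, so nothing is missing.
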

\begin{proof}
  To prove the theorem, we proceed in two steps.  First, from
  Lemma~\ref{lem:dual_pb_prop}, problem~\eqref{eq:dual_min_problem} satisfies
  the assumptions of Theorem~\ref{thm:weighted_PG} and, thus, a (deterministic)
  weighted proximal gradient solves the problem.
Thus, we need to show that \DDPGfixed/ (Algorithm~\ref{alg:fixed}) is 
a weighted proximal gradient scheme.

The weighted proximal gradient algorithm, \eqref{eq:weighted_PG}, applied to
problem~\eqref{eq:dual_min_problem}, with
$W := D_\alpha$, is given by
\begin{align} 
  y(t+1) = \prox_{D_\alpha, G^*} \Big( y(t) - D_\alpha \nabla F^*\big( y(t) \big) \Big).
  \label{eq:prox_grad}    
\end{align}
Now, by Lemma~\ref{lem:dual_pb_prop}, $L_i$ given in \eqref{eq:L_i} is the
Lipschitz constant of the $i$-th block of $\nabla F^*$. Thus, using the
hypothesis $\alpha_i \le \frac{1}{nL_i}$, we can apply
Theorem~\ref{thm:weighted_PG}, which ensures convergence with a rate of $O(1/t)$
in objective value.

In order to disclose the distributed update rule, we first split \eqref{eq:prox_grad} into two
steps, i.e.,
\begin{subequations}
  \label{eq:pg_split}
  \begin{align} 
    \label{eq:pg_grad}
    \tilde{y} & = y(t) - D_\alpha  \nabla F^*\big( y(t) \big)
    \\
    \label{eq:pg_prox}
    y(t+1) & = \prox_{D_\alpha, G^*} \big( \, \tilde{y} \,\big)
  \end{align}
\end{subequations}
and, then, compute explicitly each component of both the equations.
Focusing on \eqref{eq:pg_grad} and considering that 
$D_\alpha$ is diagonal, we can write the $i$-th block 
component of $\tilde{y}$ as
\begin{align*}
   \tilde{y}_i = 
   \begin{bmatrix}
     \tilde{\Lambda}_i \\ \tilde{\mu}_i             
   \end{bmatrix}
   = y_i(t) - \alpha_i \nabla_i F^*\big( y(t) \big),
\end{align*}
where the explicit update of the block-component of $\tilde{\Lambda}_i$ associated to neighbor $j$ is
\begin{align}
  \label{eq:tilde_lambda}
  \tilde{\lambda}_i^j & = \lambda_i^j (t) - \alpha_i \frac{\partial F^*(y)}{\partial \lambda_i^j}\Bigg|_{y=y(t)}
  \\
  \notag
  & = \lambda_i^j (t) + \alpha_i 
  \bigg[ 
  \nabla f^*_{i} \Big( -\sum_{k\in \nbrs_i} (\lambda_i^k(t) -\lambda_k^i (t) ) - \mu_i (t) \Big)
  \\ \notag & \hspace{2cm} 
  -\nabla f^*_j \Big(-\sum_{k\in \nbrs_j} (\lambda_j^k(t) -\lambda_k^j (t) ) - \mu_j (t) \Big)
  \bigg]
\end{align}
and the explicit update of $\tilde{\mu}_i$ is
\begin{align}  
  \label{eq:tilde_mu}
  \tilde{\mu}_i & = \mu_i (t) - \alpha_i \frac{\partial F^*(y)}{\partial \mu_i}\Bigg|_{y=y(t)} 
  \\
  \notag 
  & = \mu_i (t) + \alpha_i \nabla f^*_i \Big( -\sum_{k\in \nbrs_i} (\lambda_i^k(t) -\lambda_k^i(t) ) - \mu_i(t) \Big).
\end{align}
Now, denoting
\begin{align}
  \notag
  x_i^\star(t) := \nabla f^*_{i} \Big(-\sum_{k\in \nbrs_i} (\lambda_i^k(t) -\lambda_k^i (t) ) - \mu_i (t) \Big),
\end{align}
from Lemma~\ref{lem:conjugate_gradient} it holds
\begin{align}
  \notag
  \! x_i^\star(t)
  \!=\! \argmin_{x_i} 
  \!\bigg\{
    \! x_i^\top \bigg( \sum_{k\in \nbrs_i} \! \Big(\!
    \lambda_i^k(t) \!-\! \lambda_k^i(t) \Big) \!+\! \mu_i(t) \! \bigg)
  \!\!+\! f_i(x_i)  
  \!\bigg\}.
\end{align}
Thus, we can rewrite \eqref{eq:tilde_lambda} and \eqref{eq:tilde_mu} in terms of $x_i^\star$ obtaining
  \begin{align}
    \notag
    \begin{split}
      \tilde{\lambda}_i^j & = \lambda_i^j (t) + \alpha_i \Big[ x_i^\star(t) - x_j^\star(t) \Big] \\
     \tilde{\mu}_i & = \mu_i(t) + \alpha_i \; x_i^\star(t).
    \end{split}
  \end{align}
  Finally, the last step consists of applying the rule \eqref{eq:pg_prox} to $\tilde{y}$.
  In order to highlight the distributed update, we rewrite \eqref{eq:pg_prox} in a component-wise fashion, i.e.
  \begin{align}
    \notag
    y(t+1) & =
    \begin{bmatrix}
      \Lambda_1 (t+1) \\ \mu_1(t+1)
      \\ \vdots \\
      \Lambda_n(t+1) \\ \mu_n(t+1)
    \end{bmatrix} 
    =
    \prox_{D_\alpha, G^*} \left (
    \begin{bmatrix}
      \tilde{\Lambda}_1 \\ \tilde{\mu}_1
      \\ \vdots \\
      \tilde{\Lambda}_n \\ \tilde{\mu}_n
    \end{bmatrix}
    \right),
  \end{align}
  and applying Lemma~\ref{lem:proxG} and Lemma~\ref{lem:extended_Moreau} with
  $\varphi_i = g_i^*$, we obtain
  \begin{align}
    \notag
    y(t+1) \!=\!\!
    \left[\rule{0cm}{1.5cm}
    \begin{matrix}
      \tilde{\Lambda}_1 \\ \prox_{\alpha_1 g_1^*}  ( \tilde{\mu}_1 )
      \\[0.2em] \vdots \\[0.2em]
      \tilde{\Lambda}_n \\ \prox_{\alpha_n g_n^*} ( \tilde{\mu}_n )
    \end{matrix}
    \right]
    \! = \! 
    \begin{bmatrix}
      \tilde{\Lambda}_1 \\
      \tilde{\mu}_1 \!-\! \alpha_1  \prox_{ \frac{1}{\alpha_1} g_1} \! \Big( \frac{\tilde{\mu}_1}{\alpha_1} \! \Big)
      \\ \vdots \\
      \tilde{\Lambda}_n \\
      \tilde{\mu}_n \!-\! \alpha_n  \prox_{ \frac{1}{\alpha_n} g_n} \! \Big(\frac{\tilde{\mu}_n}{\alpha_n} \Big)
    \end{bmatrix}\!,
  \end{align}
  so that the proof follows.
\end{proof}

\FloatBarrier
\begin{remark}[Nesterov's acceleration]
  We can include a Nesterov's \emph{extrapolation step} in the algorithm, which
  accelerates the algorithm (further details in~\cite{nesterov2013gradient}),
  attaining a faster $O(1/t^2)$ convergence rate in objective value.  
  In order to implement the acceleration, each node needs to store a copy
    of the dual variables at the previous iteration. Thus, the update law
    in~\eqref{eq:pg_split} would be changed in the following
  \begin{align*} 
    \tilde{y} & = y(t) - D_\alpha  \nabla F^*\big( y(t) \big)
    \\
    \hat{y}(t) & = \prox_{D_\alpha, G^*} \big( \, \tilde{y} \,\big)
    \\
    y(t+1) & = \hat{y}(t) + \theta_t\left(\hat{y}(t) - \hat{y}(t-1)\right).
  \end{align*}
  where $\theta_t$ represents the Nesterov overshoot parameter.%
  \oprocend
\end{remark}

\subsection{Analysis of the node-based asynchronous algorithm}
\label{subsec:asynch_analysis}
In order to analyze the algorithm we start recalling some properties of
i.i.d. exponential random variables.
Let $i_t \in \until{n}$, $t = 1, 2, \ldots$ be the sequence identifying the
generic $t$-th triggered node. Assumption~\ref{ass:timers} implies that $i_t$ is
an i.i.d.\ uniform process on the alphabet $\until{n}$.
Each triggering will induce an iteration of the distributed optimization
algorithm, so that $t$ will be a \emph{universal}, discrete time indicating the
$t$-th iteration of the algorithm itself.
Thus, from an external, global perspective, the described local asynchronous
updates result into an algorithmic evolution in which, at each iteration, only
one node wakes up randomly, uniformly and independently from previous
iterations.
This variable will be used in the statement and in the proof of
Theorem~\ref{thm:gossip}.
However, we want to stress that this iteration counter does not need to be
  known by the agents.

\begin{theorem}
  For each $i\in\until{n}$, let $f_i$ be a proper, closed and
  strongly convex extended real-valued function with strong convexity parameter
  $\sigma_i>0$, and let $g_i$ be a proper convex extended real-valued
  function. 
  Suppose that in Algorithm~\ref{alg:gossip} each local step-size $\alpha_i$ 
  is chosen such that $0<\alpha_i \leq \frac{1}{L_i}$, with
  %
  \begin{align}
    L_i = \sqrt{ \frac{1}{\sigma_i^2} + \sum_{j\in \nbrs_i}
        \Big(\frac{1}{\sigma_i} + \frac{1}{\sigma_j}\Big)^2}, \quad \forall\, i
    \in \until{n}.
    \label{eq:L_i_asynch}
  \end{align}
  Then the sequence $y(t) = [y_1(t)^\top \ldots y_n(t)^\top]^\top$
    generated by the \DDPGgossip/ (Algorithm~\ref{alg:gossip}) converges in
    objective value with high probability, i.e., for any
    $\varepsilon \in \big( 0, \Gamma(y_0) \big)$, where
    $y_0 = [y_1(0)^\top \ldots y_n(0)^\top]^\top$ is the initial condition, and
    target confidence $0<\rho<1$, there exists $\bar{t}(\varepsilon,\rho)>0$
    such that for all $t\geq \bar{t}$ it holds
  \begin{align*}
    \Pr \Big( \Gamma(y(t)) - \Gamma^\star \leq \varepsilon \Big) \geq 1- \rho,
  \end{align*}
  where $\Gamma^\star$ is the optimal cost of
  problem~\eqref{eq:dual_min_problem}.
  \label{thm:gossip}
\end{theorem}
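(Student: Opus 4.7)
The plan is to reduce Algorithm~\ref{alg:gossip} to the UCDC scheme of Algorithm~\ref{alg:ucdc} applied to the dual problem~\eqref{eq:dual_min_problem}, with $\Phi=F^*$, $\Psi=G^*$, and the block decomposition $y=[y_1^\top\,\ldots\,y_n^\top]^\top$ where $y_i=[\Lambda_i^\top\;\mu_i^\top]^\top$ as in~\eqref{eq:yi}. Once this identification is established, the theorem follows directly from Theorem~\ref{thm:ucdc}, since Lemma~\ref{lem:dual_pb_prop} already certifies Assumptions~\ref{ass:smoothness}, \ref{ass:separability} and~\ref{ass:feasibility} with the constants $L_i$ of~\eqref{eq:L_i_asynch}, and the step-size condition $\alpha_i\le 1/L_i$ plays the role of $1/L_{i_t}$ in~\eqref{eq:UCDC_Ti}.

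First I would justify that the sequence $\{i_t\}$ of awake nodes is i.i.d.\ uniform over $\until{n}$. Under Assumption~\ref{ass:timers} the waiting times $T_i$ are i.i.d.\ exponential, so the index achieving the minimum is uniform on $\until{n}$, and the memoryless property of the exponential distribution, together with the timer reset after each triggering, makes the indices across iterations mutually independent. Thus the external, global view of the algorithm is exactly the random block-selection rule used in Algorithm~\ref{alg:ucdc}.

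Next I would verify the update equivalence. Computing the UCDC subproblem~\eqref{eq:UCDC_Ti} for block $i$ and exploiting $\psi_i(y_i)=g_i^*(\mu_i)$ (no dependence on $\Lambda_i$), the minimization in $V_{i}(y,s)$ splits into a quadratic part in the $\Lambda_i$-component and a $g_i^*$-regularized quadratic in the $\mu_i$-component. The resulting block update reads
\begin{align*}
\Lambda_i^+ &= \Lambda_i - \alpha_i\,\nabla_{\Lambda_i}F^*(y),\\
\mu_i^+ &= \prox_{\alpha_i g_i^*}\!\bigl(\mu_i - \alpha_i\,\nabla_{\mu_i}F^*(y)\bigr),
\end{align*}
that is, a weighted proximal gradient step on block $y_i$ with local step-size $\alpha_i$. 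Following the same calculation as in the proof of Theorem~\ref{thm:fixed}, Lemma~\ref{lem:conjugate_gradient} rewrites $\nabla_{\lambda_i^j}F^*$ and $\nabla_{\mu_i}F^*$ at the current iterate in terms of the primal quantities $x_i^\star$ and $x_j^\star$, yielding the updates of $\lambda_i^j$ and $\tilde{\mu}_i$ prescribed in the AWAKE block of Algorithm~\ref{alg:gossip}. Finally, Lemma~\ref{lem:extended_Moreau} with $\varphi=g_i$ converts $\prox_{\alpha_i g_i^*}(\tilde{\mu}_i)$ into $\tilde{\mu}_i-\alpha_i\prox_{\frac{1}{\alpha_i}g_i}(\tilde{\mu}_i/\alpha_i)$, which is exactly the $\mu_i^+$ formula implemented by the algorithm.

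The main obstacle, in my view, is justifying that the values $x_j^\star$ used by the awake node $i$ truly correspond to the current global iterate $y(t)$, despite being stored and updated asynchronously at the neighbors. This is handled by the IDLE rule: every idle node $j$ recomputes and re-broadcasts its $x_j^\star$ whenever it receives a fresh $\lambda_k^j$ from any neighbor $k$, so each $x_j^\star$ is kept equal to $\nabla f_j^*\!\bigl(-\sum_k(\lambda_j^k(t)-\lambda_k^j(t))-\mu_j(t)\bigr)$. Combined with the uniform i.i.d.\ block selection established above and the block-update identity, this shows that the random sequence $y(t)$ produced by Algorithm~\ref{alg:gossip} coincides in distribution with the UCDC iterates on problem~\eqref{eq:dual_min_problem}, and Theorem~\ref{thm:ucdc} delivers the claimed high-probability bound for all $t\ge \bar t(\varepsilon,\rho)$.
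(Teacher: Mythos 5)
Your proposal is correct and follows essentially the same route as the paper's own proof: reduction to UCDC on the dual via Lemma~\ref{lem:dual_pb_prop}, identification of the block update with the AWAKE step through Lemmas~\ref{lem:conjugate_gradient}, \ref{lem:proxG} and~\ref{lem:extended_Moreau}, and the same observations about uniform i.i.d.\ node activation and the IDLE-mode rebroadcast of $x_j^\star$. The only cosmetic difference is that the paper fixes $\alpha_i = 1/L_i$ in the proof and remarks that smaller step-sizes are also admissible, whereas you carry the general $\alpha_i \le 1/L_i$ throughout.
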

\begin{proof}
To prove the theorem, we proceed in two steps. First, we show that we can 
apply the Uniform Coordinate Descent for Composite functions
(Algorithm~\ref{alg:ucdc}) to solve problem~\eqref{eq:dual_min_problem}.
Second, we show that, when applied to this problem, Algorithm~\ref{alg:ucdc}
gives the iterates of our \DDPGgossip/.

The first part follows immediately by Lemma~\ref{lem:dual_pb_prop}, which
asserts that problem~\eqref{eq:dual_min_problem} satisfies the assumptions of
Theorem~\ref{thm:ucdc}, so that Algorithm~\ref{alg:ucdc} solves it.

Next, we show that the two algorithms have the same update. First, by
Lemma~\ref{lem:dual_pb_prop}, $L_i$ given in \eqref{eq:L_i_asynch} is the
Lipschitz constant of the $i$-th block of $\nabla F^*$. Thus, in the rest of the
proof, following \cite{richtarik2014iteration}, we set $\alpha_i=\frac{1}{L_i}$
(the maximum allowable value). Clearly the convergence is preserved if a smaller
stepsize is used.

Consistently with the notation in Algorithm~\ref{alg:ucdc}, let $i_t$ denote the uniform-randomly
selected index at iteration $t$.  Thus, $T^{(i_t)}(y(t)) = \argmin_{ s_{i_t} \in \real^{N_{i_t}}} \Big \{ V_{i_t}
( y(t), s_{i_t} ) \Big \}$ 
defined in \eqref{eq:UCDC_Ti} can be written in terms of a proximal gradient
update applied to the $i_t$-th block component of $y$. In fact, by definition, for
our function $\Gamma = F^* + G^*$, we have 
\begin{align*}
  T^{(i_t)} \big( y(t) \big) & = \argmin_{s \in \real^{N_{i_t}}} 
  \Big\{
    \nabla_{i_t} F^*\big( y(t) \big)^\top s
    \\
    & \hspace{1.8cm} + \frac{L_{i_t}}{2} \| s \|^2 + g_{i_t}^* \big( y_{i_t}(t) + s \big) 
  \Big\}.
\end{align*}
In order to apply the formal definition of a proximal gradient step,
we add a constant term and introduce a change of variable given by 
$\bar{s} := y_{i_t} (t) + s$, obtaining
%
\begin{align*}
  & T^{(i_t)} \big( y(t) \big) = - y_{i_t}(t) + \argmin_{\bar{s} \in \real^{N_{i_t}}} \Big \{ U_{i_t}^\top F^*\big( y(t) \big)
  \\ &  \hspace{0.3cm} 
  + \nabla_{i_t} F^*\big( y(t) \big)^\top (\bar{s} \!-\! y_{i_t}(t)) + \frac{L_{i_t}}{2} \| \bar{s} \!-\! y_{i_t} (t) \|^2 + g_{i_t}^*( \bar{s} ) \Big \},
\end{align*}
which yields
\begin{align}
  \notag
  T^{(i_t)} \! \big( y(t) \big) 
    \! = \! - y_{i_t}(t) \!+\! \prox_{ \frac{1}{L_{i_t}} g_{i_t}^* } \! \Big( \! y_{i_t}(t) \!-\! \frac{1}{L_{i_t}} \nabla_{i_t} F^*\big( y(t) \big) \!\Big).
\end{align}
Thus, update \eqref{eq:UCDC_update} in fact changes only the component $y_{i_t}$
of $y$, which is updated as 
\begin{align}
  \label{eq:yi_coord_update}
  y_{i_t}(t+1) & = y_{i_t}(t)  + T^{(i_t)} \big( y(t) \big) \\
  \notag
  & = \prox_{ \frac{1}{L_{i_t}} g_{i_t}^* } \Big( y_{i_t}(t) - \frac{1}{L_{i_t}} \nabla_{i_t} F^*\big( y(t) \big) \Big),
\end{align}
while all the other ones remain unchanged, i.e., $y_i(t+1) = y_i(t)$ for all
$i\in\until{n}$ with $i\neq i_t$. 

Following the same steps as in the proof of Theorem~\ref{thm:fixed}, we split the update in \eqref{eq:yi_coord_update} 
into a gradient and a proximal steps. The gradient step is given by 
\begin{align*}
   \tilde{y}_{i_t} =
   \begin{bmatrix}
     \tilde{\Lambda}_{i_t} \\ \tilde{\mu}_{i_t}
   \end{bmatrix}
   = y_{i_t}(t) - \frac{1}{L_{i_t}} \nabla_{i_t} F^*\big( y(t) \big)
\end{align*}
where $\tilde{\Lambda}_{i_t}$ and $\tilde{\mu}_{i_t}$ are the same as in \eqref{eq:tilde_lambda} 
and \eqref{eq:tilde_mu} respectively. 
The proximal operator step turns out to be
\begin{align} 
  \notag
   y_{i_t} (t+1) = 
   \begin{bmatrix} \Lambda_{i_t}(t+1) \\ \mu_{i_t}(t+1) \end{bmatrix}  
   = \prox_{ \frac{1}{L_{i_t}} g_{i_t}^* } \left( \begin{bmatrix} \tilde\Lambda_{i_t} \\ \tilde\mu_{i_t} \end{bmatrix} \right).
\end{align}
Applying Lemma~\ref{lem:proxG} on the $i_t$-th block with
$\alpha_{i_t} = 1/L_{i_t}$, we can rewrite \eqref{eq:yi_coord_update} as
\begin{align}
  \begin{bmatrix} \Lambda_{i_t}(t+1) \\ \mu_{i_t}(t+1) \end{bmatrix} 
  =
  \begin{bmatrix} 
    \tilde{\Lambda}_{i_t} (t) \\
    \tilde{\mu}_{i_t} - \frac{1}{L_{i_t}} \prox_{ L_{i_t} g_{i_t}} \! \Big(L_{i_t} {\tilde{\mu}_{i_t}} \Big)
  \end{bmatrix}, 
\label{eq:dual_it_update}
\end{align}
where each component of $\tilde{\Lambda}_{i_t}$ is given by
\begin{align*}
  \tilde{\lambda}_{i_t}^j &= \lambda_{i_t}^j (t) + \frac{1}{L_{i_t}}\Big[
                            x_{i_t}^\star(t) - x_j^\star(t) \Big],
\end{align*}
and
\begin{align*}
  \tilde{\mu}_{i_t} & = \mu_{i_t}(t) + \frac{1}{L_{i_t}} \; x_{i_t}^\star(t),
\end{align*}
with
\begin{align}
  \notag
  \! x_i^\star(t)
  \!=\! \argmin_{x_i} 
  \!\bigg\{
    \! x_i^\top \bigg( \sum_{k\in \nbrs_i} \! \Big(\!
    \lambda_i^k(t) \!-\! \lambda_k^i(t) \Big) \!+\! \mu_i(t) \! \bigg)
  \!\!+\! f_i(x_i)  
  \!\bigg\}.
\end{align}
Here we have used again the property from Lemma~\ref{lem:conjugate_gradient}
\[
\nabla f^*_{i} \Big(-\sum_{k\in \nbrs_i} (\lambda_i^k(t) -\lambda_k^i (t) ) -
\mu_i (t) \Big) = x_i^\star(t).
\]

Now, from Assumption~\ref{ass:timers} a sequence of nodes $i_t$, $t=1,2...$,
becomes active according to a uniform distribution, so that each node triggering
can be associated to an iteration of Algorithm~\ref{alg:ucdc} given by the
update in \eqref{eq:dual_it_update}. That is, only a node $i_t$ is active in the
network, which performs an update of its dual variables $y_{i_t}$. In order to
perform the local update, the selected node $i_t$ needs to know the most updated
information after the last triggering. As regards the neighbors' dual variables
$\lambda_{j}^{i_t}$, $j\in \nbrs_{i_t}$, they have been broadcast by each
$j\in \nbrs_{i_t}$ the last time it has become active. Regarding the primal
variables $x_{j}^\star$, $j\in \nbrs_{i_t}$, the situation is a little more
tricky. Indeed, $x_{j}^\star$, $j\in \nbrs_{i_t}$, may have changed in the past
due to either $j$ or one of its neighbors has become active. 
In both cases $j$ has to broadcast to $i_t$ its updated dual variable (either
because it has become active or because it has received, in idle, an updated
dual variable from one of its neighbors).
\end{proof}

\begin{remark}
  Differently from the synchronous algorithm, in the asynchronous version nodes
  do not need to know the number of nodes, $n$, in order to set their local
  step-size. In fact, each node $i$ can set its parameter $\alpha_i$ by only
  knowing the convexity parameters $\sigma_i$ and $\sigma_j$,
  $j\in\nbrs_i$. \oprocend
\end{remark}

\begin{remark}
  If a strongly convex, separable penalty term is added to the dual function
  $\Gamma = F^* + G^*$, then it becomes strongly convex, so that a stronger
  result from \cite[Theorem~$7$]{richtarik2014iteration} applies, i.e.,
  \emph{linear} convergence with high probability is guaranteed.
  Note that strong convexity of the dual function $\Gamma$ is obtained if the
  primal function has Lipschitz continuous 
  gradient,~\cite[Chapter X, Theorem 4.2.2]{hiriart1993convex}.~\oprocend
\end{remark}

\subsection{Analysis of the edge-based asynchronous algorithm}

The convergence of Algorithm~\ref{alg:edge_based} relies essentially on the same
arguments as in Theorem~\ref{thm:gossip}, but with a different block partition.
In fact, we have to split the $y$ variable into $|\EE|$ blocks (with $|\EE|$ the
number of edges of $\GG$). Notice that since the dual variables $\mu_i$ are only
$n$, they need to be associated to a subset of edges. Thus, the variable $y$ is
split in blocks $y_{ij}$ given by
$y_{ij} = [\lambda_i^j \; \lambda_j^i \; \mu_i]$ if $j=j_{\mu_i}$ and
$y_{ij} = [\lambda_i^j \; \lambda_j^i]$ otherwise.
This is why in the algorithm $\mu_i$ is updated only when neighbor $j_{\mu_i}$
becomes active.

\section{Motivating optimization scenarios and numerical computations}
\label{sec:examples_sim}

\label{subsec:motivating}

\subsection{Constrained optimization}
As first concrete setup, we consider a separable constrained convex optimization problem
\begin{align}
  \label{eq:constrained_opt}
  \begin{split}
  \min_x \:\: & \:\: \sum_{i=1}^n h_i (x)
  \\
  \subj \:\: & \:\: x \in \bigcap_{i=1}^n X_i \subseteq \real^d
  \end{split}
\end{align}
where each $h_i$ is a strongly convex function and each $X_i$ is a
closed, convex set.

This problem structure is widely investigated in distributed and
  large-scale optimization as shown in the literature review. Notice that, as
  pointed out in our discussion in the introduction, we assume strong convexity
  of $h_i$, but we do not need to assume smoothness.

We can rewrite this problem by transforming the constraints into
additional terms in the objective function, by using indicator functions $I_{X_i}$
associated to each $X_i$,
\begin{align}
  \notag
  \min_{x \in \real^d} \: \: \sum_{i=1}^n \Big( h_i (x)  + I_{X_i} (x) \Big).
\end{align}
Since each $X_i$ is a convex set, then $I_{X_i}$ is a convex function. Thus the
problem can be mapped in our distributed setup~\eqref{eq:problemsetup} by
setting $f_i(x) = h_i(x)$ and $g_i(x) = I_{X_i}(x)$.

Treating the local constraints in this way, we have to perform a local
\emph{unconstrained} minimization step when computing $x^\star_i$, while the
local feasibility is entrusted to the proximal operator of $g_i$. In fact, the
proximal operator of a convex indicator function reduces to the standard
Euclidean projection, i.e.
\begin{align}
  \notag
  \prox_{I_X} \big( v \big)
  & = \argmin_{x} \Big\{ I_X(x) + \frac{1}{2} \|x-v\|^2 \Big\} = \Pi_X(v).
\end{align}

\begin{remark}
When considering quadratic costs $h_i$, we can benefit greatly from a numerical point
of view. In fact, an unconstrained quadratic program can be solved via efficient 
methods, which often result in division-free algorithms (possibly after some
off-line precomputations), and can be implemented in fixed-point arithmetic,
see \cite{odonoghue2013splitting} for further details.~\oprocend
\end{remark}

An attractive feature of our setup is that one can conveniently decide how to
rewrite the local constraints. In the formulation above, we suggested to include
the local constraint into $g_i$.  But it is also reasonable to include the
constraint into $f_i$, by consider the indicator function in its definition,
i.e., define
\begin{align}
  f_i(x) := 
  \begin{cases}
    h_i(x) & \text{ if } x \in X_i
    \\
    +\infty & \text{ otherwise.}
  \end{cases}
\label{eq:f_i_indicator}
\end{align}
and, thus, have $g_i$ identically zero (still convex). This strategy results
into an algorithm which is basically an asynchronous distributed dual
decomposition algorithm.
Notice that with this choice recursive feasibility is obtained provided that
  the local algorithm solves the minimization in an interior point fashion.

Between these two extreme scenarios one could also consider other
possibilities. Indeed, it could be the case that one can benefit from splitting
each local constraint $X_i$ into two distinct contributions, i.e.,
$X_i = Y_i \cup Z_i$. In this way the indicator function of $Y_i$ (e.g. the
positive orthant) could be included into $f_i$, allowing for a simpler
constrained local minimization step, while the other constraint could be mapped
into the second term as $g_i(x) = I_{Z_i}(x)$.

The choice in \eqref{eq:f_i_indicator} leads to a simple
observation: leaving the $g_i$ equal to zero seems to be a waste of a degree of
freedom that could be differently exploited, e.g., by introducing a
regularization term.

\subsection{Regularized and constrained optimization}
As highlighted in the previous paragraph, the flexibility of our algorithmic
framework allows us to handle, together with local constraints, also a
regularization cost through the $g_i$. Regularize the solution is a useful
technique in many applications as sparse design, robust estimation in
statistics, support vector machine (SVM) in machine learning, total variation
reconstruction in signal processing and geophysics, and compressed sensing.  In
these problems, the cost $f_i$ is a loss function representing how the
predictions based on a theoretical model mismatch the real data.  
Next, we focus on the most widespread choice for the loss function, which is the
least square cost, giving rise to the following optimization problem
\begin{align}
  \label{eq:LS}
  \min_x \:\: \sum_{i=1}^n\|A_ix - b_i \|^2
\end{align}
where $A_i$ are data/regressors and $b_i$ are labels/observations.

A typical challenge arising in regression problems is due to the fact that problem \eqref{eq:LS} 
is often ill-posed and standard algorithms easily incur in over-fitting phenomena.
A viable technique to prevent over-fitting consists of adding a regularization cost;
usual choices are the $\ell_2$-norm, also referred as Tikhonov regularization or ridge
regression, or the $\ell_1$-norm, which leads to the so called LASSO (Least Absolute 
Shrinkage and Selection Operator) problem
\begin{align*}
  \min_x \:\: \sum_{i=1}^n \|A_ix - b_i \|^2 + \gamma \|x\|_1
\end{align*}
where $\gamma$ is a positive scalar.

In some cases (as, e.g., in distributed estimation \cite{kar2011convergence})
one may be interested in having the solution bounded in a given box or leaving
in a reduced subspace. This gives rise to the so called \emph{constrained LASSO}
problem, see, e.g.,
\cite{james2012constrained,necoara2016parallel,xu2013generalized}.

As discussed before, our setup can simultaneously manage a constrained and
regularized problem as the constrained lasso. The first way to map the problem
in our setup is by defining 
\begin{align}
  f_i(x) := 
  \begin{cases}
    \|A_ix - b_i \|^2 & \text{ if } x \in X_i
    \\
    +\infty & \text{ otherwise}
  \end{cases}
\label{eq:f_i_classo}
\end{align}
and setting
\begin{align*}
  g_i(x) := \frac{\gamma}{n} \|x\|_1.
\end{align*}

The proximal operator of the $\ell_1$-norm admits an analytic solution which is
well known as soft thresholding operator. When applied to a vector
$v \in \real^d$ (with $\ell$-th component $v_\ell$), it gives a vector in $\real^d$
whose $\ell$-th component is
\begin{center}
\begin{tikzpicture}[scale=0.55]
\node at (-8,0) {$\Big( \prox_{\gamma \| \cdot \|_1} ( v ) \Big)_\ell \! = 
  \begin{cases}
    v_\ell-\gamma, & \!\! v_\ell >\gamma \\
    0,                   & \!\! |v_\ell| \leq \gamma \\
    v_\ell+\gamma, & \!\! v_\ell < -\gamma
  \end{cases}
$};

\clip (-2.5,-2.3) rectangle +(5,5);

  \coordinate (A) at (-2.0,-1.5);

  \tikzstyle{axes} = [draw=black, line width=0.5pt,->,>=stealth]
  \draw[axes] (-2.3,0) -- (2.3,0); 
  \draw[axes] (0,-2.0) -- (0,2.0); 

  \draw[gray, line width=1.5pt] (A) -- ++(1.5,1.5) -- ++(1,0) -- ++(1.5,1.5);

  \tikzstyle{axes_thick} = [draw=black, line width=0.5pt]

  \node at (-1.4,-0.25) {$-\gamma$};
  \node at (0.6,-0.3) {$\gamma$};
 
\end{tikzpicture}
\end{center}
i.e., it thresholds the components of $v$ which are in modulus greater then
$\gamma$, see, e.g., \cite{beck2009fast,parikh2013proximal}.

Alternatively, we may include both the constraint $X_i$ and the regularization
term into the $g_i$, obtaining an unconstrained local minimization at each
node. This choice is particularly appealing when the constraint $X_i$ is a box,
i.e.,
$X_i = \{ v \in \real^d \mid lb_\ell\leq v_\ell \leq ub_\ell \; \text{for all} \;
\ell\in\until{d} \}$.
In this case the proximal operator of $g_i$ becomes a \emph{saturated} version of the
soft-thresholding operator, as depicted in
Figure~\ref{fig:saturated_soft_thresholding}.
\begin{figure}[!htbp]
\centering
\begin{tikzpicture}[scale=0.6]

  \coordinate (A) at (-3,-1.5);

  \tikzstyle{axes} = [draw=black, line width=0.5pt,->,>=stealth]
  \draw[axes] (-3.1,0) -- (3.1,0); 
  \draw[axes] (0,-2.0) -- (0,2.2); 

  \draw[gray, line width=1.5pt] (A) -- ++(1,0) -- ++(1.5,1.5) -- ++(1,0) -- ++(1.5,1.5) -- ++(1,0);

  \draw[gray, densely dashed,line width=0.5pt] (-2.0,-1.5) -- (0,-1.5);
  \draw[gray, densely dashed,line width=0.5pt] (0,1.5) -- (2.0,1.5);

  \tikzstyle{axes_thick} = [draw=black, line width=0.5pt]
  \draw[axes_thick] (-0.05,+1.5) -- (0.05,+1.5);
  \draw[axes_thick] (-0.05,-1.5) -- (0.05,-1.5);

  \node at (0.5,-1.5) {$lb_\ell$};
  \node at (-0.5,1.5) {$ub_\ell$};
  \node at (-1.4,-0.25) {$-\gamma$};
  \node at (0.6,-0.3) {$\gamma$};
  
\end{tikzpicture}
\caption{Saturated soft-thresholding operator.}
\label{fig:saturated_soft_thresholding}
\end{figure}

\subsection{Numerical tests}
\label{subsec:simulations}
In this section we provide a numerical example showing the effectiveness of
the proposed algorithms. 

We test the proposed distributed algorithms on a constrained LASSO optimization problem,
\begin{align*}
  \min_{lb \leq  x \leq ub} \:\: \sum_{i=1}^n \|A_i x - b_i\|^2 + \gamma \|x\|_1,
\end{align*}
where $x \in \real^3$ is the decision variable, and
$A_i \in \real^{150 \times 3}$ and $b_i \in \real^{150}$ represent respectively
the data matrix and the labels associated with examples assigned to node
$i$. The inequality $lb \leq x \leq ub$ is meant component-wise.

We randomly generate the LASSO data following the idea suggested in
\cite{parikh2013proximal}. Each element of $A_i$ is
$\sim \mathcal{N} (0, 1)$, and $b_i$s are generated by perturbing a ``true''
solution $x_\text{true}$ (which has around a half nonzero entries) with an
additive noise $v \sim \mathcal{N}(0,10^{-2}I)$. Then the matrix $A_i$ and
the vector $b_i$ are normalized with respect to the number of local samples at
each node. 
The box bounds are set to $lb = \begin{bmatrix} -0.8 & -0.8 & -0.8
\end{bmatrix}^\top$ and $ub = \begin{bmatrix} 0.8 & 0.8 & 0.8
\end{bmatrix}^\top$, while the regularization parameter is $\gamma = 0.1$.

To match the problem with our distributed framework, we introduce copies $x_i$
of the decision variable $x$. Consistently, we define the local functions $f_i$
as the least-square costs in \eqref{eq:f_i_classo}, where each $X_i$ is the box
defined by $lb$ and $ub$.
We let each $g_i$ be the $\ell_1$-norm regularization term with local parameter
$\gamma/n$.
We initialize to zero the dual variables $\lambda_i^j$, $j \in \nbrs_i$, and
$\mu_i$ for all $i \in \until{n}$, and use as step-sizes $\alpha_i = L_i$, where
$L_i$ has the expression in \eqref{eq:L_i}, with $\sigma_i$ being the smallest
eigenvalue of $A_i^\top A_i$.

We consider an undirected connected Erd\H{o}s-R\'enyi graph $\GG$, with
parameter $0.2$, connecting $n=50$ nodes.

We run both the synchronous and the asynchronous algorithms over this underlying
graph and we stop them if the difference between the current dual cost and the
optimal value drops below the threshold of $10^{-6}$.

  Figure~\ref{fig:cost} shows the difference between the dual cost at each
  iteration $t$ and the optimal value, $\Gamma(y(t)) - \Gamma^\star$, in a
  logarithmic scale. In particular, the rates of convergence of the synchronous
  (left) and asynchronous (right) algorithms are shown. For the
  asynchronous algorithm, we normalize the iteration counter $t$ with respect
  the number of agents $n$.

\begin{figure}[!htbp]
\centering
  \includegraphics[scale=0.85]{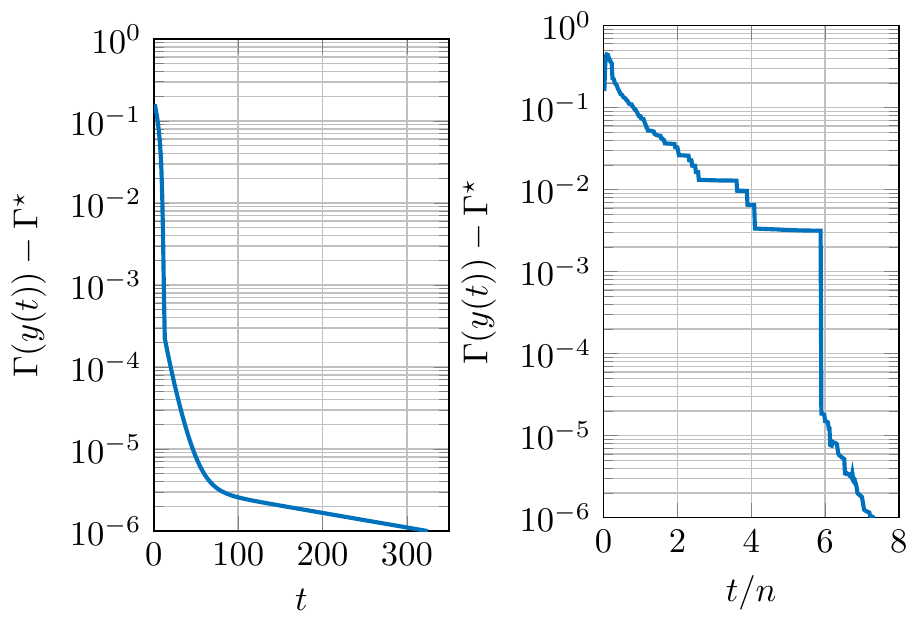}
  \caption{Evolution of the cost error, in logarithmic scale, for the
    synchronous (left) and node-based asynchronous (right)
    distributed algorithms.
    }
  \label{fig:cost}
\end{figure}

Then we concentrate on the asynchronous algorithm. In Figure~\ref{fig:xi_zoom}
we plot the evolution of the (three) components of the primal variables,
$x_i^\star(t)$, $i\in\until{n}$.  The horizontal dotted lines represent the
optimal solution.
It is worth noting that the optimal solution has a first component slightly
below the constraint boundary, a second component equal to zero, and a third
component on the constraint boundary. This optimal solution can be intuitively
explained by the ``simultaneous influence'' of the box constraints (which
restrict the set of admissible values) and the regularization term (which
enforces sparsity of the vector $x$).
In the inset the first iterations for a subset of selected nodes are
highlighted, in order to better show the transient, piece-wise constant behavior
due to the gossip update and the effect of the box constraint on each component.
\begin{figure*}[!htbp]
\centering
\includegraphics[scale=0.8]{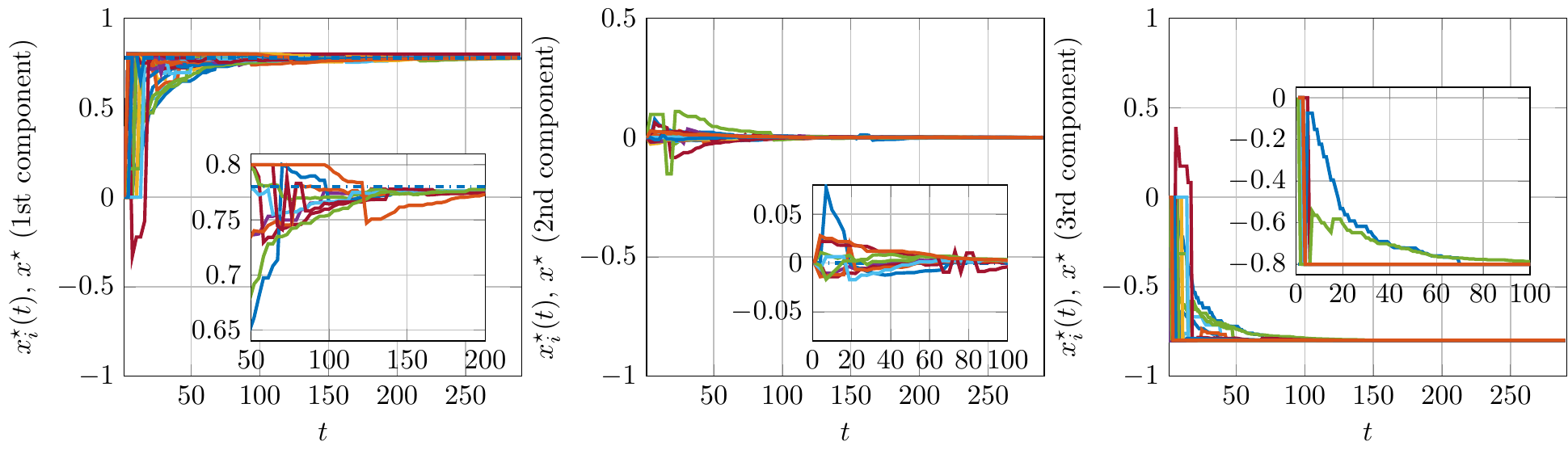}
  \caption{
    Evolution of the three components of the primal variables $x_i^\star(t)$, $i \in \until{n}$,
    for the node-based \DDPGgossip/. 
  }
  \label{fig:xi_zoom}
\end{figure*}

Specifically, for the first component, it can be seen how the temporary solution
of some agents hits the boundary in some iterations (e.g., one of them hits the
boundary from iteration $50$ to iteration $100$) and then converges to the
(feasible) optimal value. The second components are always inside the box
constraint and converge to zero, while the third components start inside the box
and then in a finite number of iterations hit the boundary.

\section{Conclusions}
\label{sec:Conclusions}
In this paper we have proposed a class of distributed optimization algorithms,
based on dual proximal gradient methods, to solve constrained optimization
problems with non-smooth, separable objective functions.
The main idea is to construct a suitable, separable dual problem via a proper
choice of primal constraints and solve it through proximal gradient algorithms.
Thanks to the separable structure of the dual problem in terms of local
conjugate functions, a weighted proximal gradient update results into a
distributed algorithm, where each node performs a local minimization on its
primal variable, and a local proximal gradient update on its dual variables.
As main contribution, two asynchronous, event-triggered distributed algorithms
are proposed, in which the local updates at each node are triggered by a local
timer, without relying on any global iteration counter. The asynchronous
algorithms are shown to be special instances of a randomized, block-coordinate
proximal gradient method on the dual problem.
The convergence analysis is based on a proper combination of duality theory,
coordinate-descent methods, and properties of proximal operators.

\begin{small}
  \bibliographystyle{IEEEtran} 
  \bibliography{proximal} 
\end{small}

\begin{IEEEbiography}
   [{\includegraphics[scale=0.075]{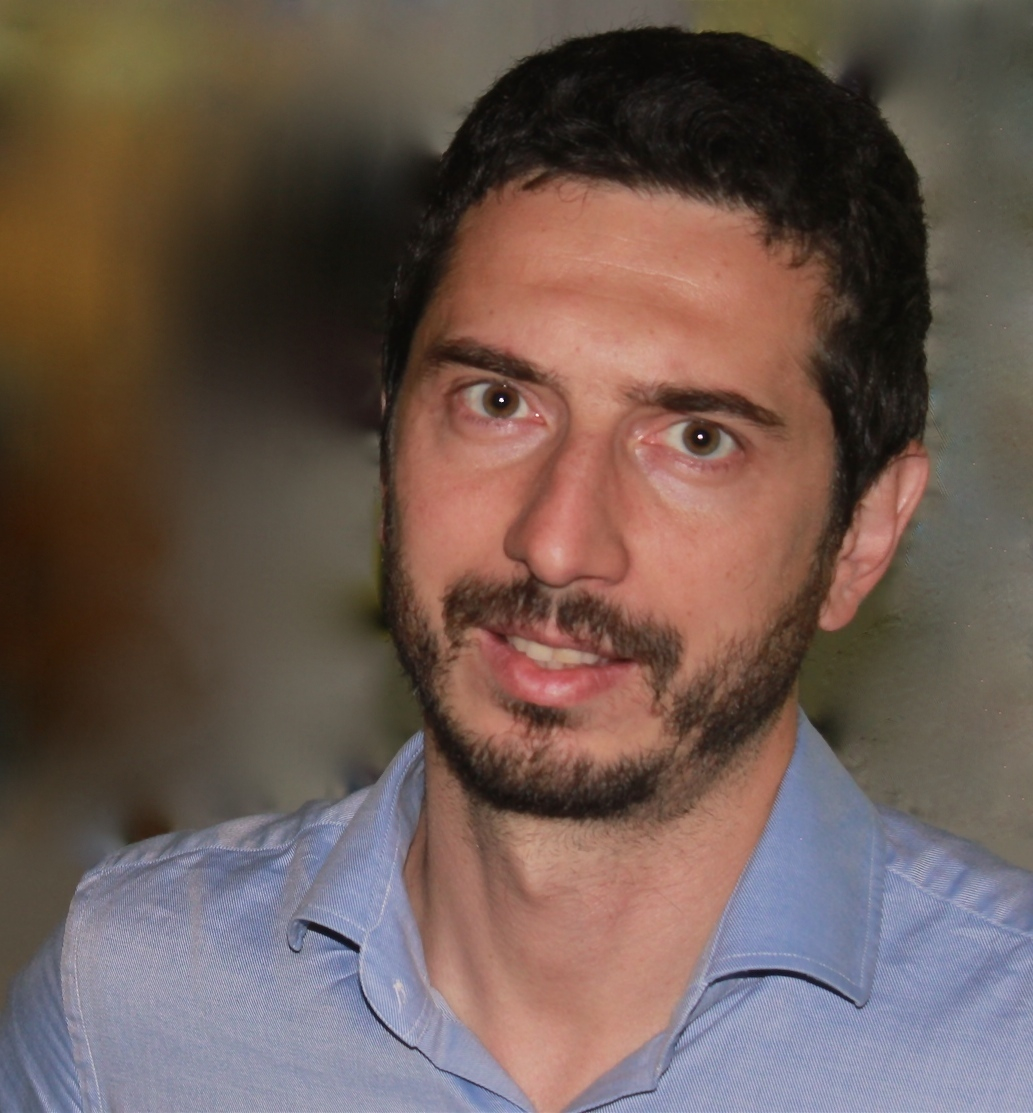}}]
  {Giuseppe Notarstefano}
  has been an Assistant Professor (Ricercatore) at the Universit\`a del Salento (Lecce, Italy) 
  since February 2007. He received the Laurea degree ``summa cum laude'' in Electronics 
  Engineering from the Universit\`a di Pisa in 2003 and the Ph.D. degree in Automation and 
  Operation Research from the Universit\`a  di Padova in April 2007. He has been visiting scholar 
  at the University of Stuttgart, University of California Santa Barbara and University of Colorado 
  Boulder. His research interests include distributed optimization, motion coordination in multi-agent 
  networks, applied nonlinear optimal control, and modeling, trajectory optimization and aggressive 
  maneuvering of aerial and car vehicles. He serves as an Associate Editor in the Conference Editorial 
  Board of the IEEE Control Systems Society and for the European Control Conference, IFAC World 
  Congress and IEEE Multi-Conference on Systems and Control. He coordinated the VI-RTUS 
  team winning the International Student Competition Virtual Formula 2012. He is recipient 
  of an ERC Starting Grant 2014.
\end{IEEEbiography}
\begin{IEEEbiography}
  [{\includegraphics[scale=0.16]{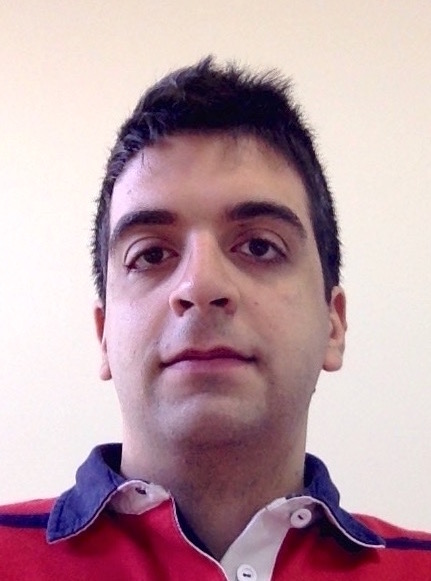}}]{Ivano Notarnicola}
  has been a Ph.D. student in ”Engineering of Complex Systems” at the Universit\`a 
  del Salento (Lecce, Italy) since November 2014. He received the Laurea degree 
  ``summa cum laude'' in Computer Engineering from the Universit\`a del Salento 
  in 2014. He was a visiting student at the Institute of System Theory (Stuttgart, 
  Germany) from March to June 2014. His research interests include distributed 
  optimization and randomized algorithms.
\end{IEEEbiography}

\end{document}